\documentclass[a4paper,UKenglish,cleveref,  thm-restate]{lipics-v2021}
\pdfoutput=1 
\hideLIPIcs
\nolinenumbers

\usepackage{booktabs} 
\usepackage[ruled]{algorithm2e} 

\SetAlFnt{\small}
\SetAlCapFnt{\small}
\SetAlCapNameFnt{\small}
\SetAlCapHSkip{0pt}
\IncMargin{-\parindent}

\usepackage{cleveref}

\crefname{lemma}{Lemma}{Lemmas}
\crefname{theorem}{Theorem}{Theorems}
\usepackage{graphicx} 

\usepackage{amsthm}
\usepackage{amsmath}
\usepackage{amsfonts}
\usepackage{mdframed}
\usepackage[colorinlistoftodos]{todonotes}
\usepackage{verbatim} 
\usepackage{fetamont}
\usetikzlibrary{patterns}
\usepackage{caption, subcaption}
\usepackage{xspace}
\usepackage[nolist]{acronym}
\usepackage{ifthen}
\usepackage{braket}
\usepackage[nolist]{acronym}
\usepackage{tikz}
\usepackage{pgfplots}\pgfplotsset{compat=1.17}
\usepackage{nicefrac}

\usepackage{etoolbox}
\AtBeginEnvironment{align}{\setcounter{equation}{0}}

\newtheorem{metatheorem}{Meta-Theorem}

\newcommand{\probName}{MMS scheduling with deadlines problem}
\newcommand{\jobs}{\mathcal{J}}
\newcommand{\machines}{\mathcal{M}}
\newcommand{\fairValue}{\varphi}
\newcommand{\partition}{\textsc{Partition}}
\newcommand{\ecpartition}{\textsc{Equal-Cardinality Partition}}

\newcommand{\rtNfold}{T(\text{$n$-fold})}
\newcommand{\MMSi}{\text{MMS}_i}
\newcommand{\multmms}{\text{$\times$-}}
\newcommand{\addmms}{\text{+-}}
\newcommand{\wfmms}{\text{wf-}}

\authorrunning{K. Jansen, A. Lassota, M. Tutas and A. Vetta} 

\Copyright{K. Jansen, A. Lassota, M. Tutas and A. Vetta} 
\ccsdesc[300]{Theory of computation~Fixed parameter tractability}

\keywords{scheduling, parameterized complexity, algorithmic fairness, maximin shares}

\bibliographystyle{plainurl}

\author{Klaus Jansen}{Kiel University}{kj@informatik.uni-kiel.de}{}{}
\author{Alexandra Lassota}{Technical University Eindhoven}{a.a.lassota@tue.nl}{}{}
\author{Malte Tutas}{Kiel University}{mtu@informatik.uni-kiel.de}{}{}
\author{Adrian Vetta}{McGill University}{adrian.vetta@mcgill.ca}{}{}
\title{FPT Algorithms using Minimal Parameters for a Generalized
  Version of Maximin Shares}

\begin{document}

\maketitle
 \begin{abstract}
We study the computational complexity of fairly allocating indivisible, mixed-manna items. For basic measures of fairness, this problem is hard in general. 
Thus, research has flourished concerning input classes 
where efficient algorithms exist, both for the purpose of
establishing theoretical boundaries and for the purpose
of designing practical algorithms for real-world instances. 
Notably, the paradigm of fixed-parameter tractability~(FPT) has lead to new insights and improved algorithms for a variety of fair allocation problems; see, for example, Bleim et al. (IJCAI’16), Aziz et al. (AAAI’17), Bredereck et al. (EC'19) and Kulkarni et al. (EC’21). 

Our focus is the fairness measure {\em maximin shares} (MMS). 
Motivated by the general non-existence of MMS allocations, Aziz et al. (AAAI’17) studied {\em optimal MMS allocations}, namely
solutions that achieve the best $\alpha$-approximation for the maximin share value of every agent. These allocations are guaranteed to exist, prompting the important open question of whether optimal MMS allocations can be computed efficiently.  
We answer this question affirmatively by providing FPT algorithms to output optimal MMS allocations. Furthermore, 
our techniques extend to find allocations that optimize alternative objectives, such as minimizing the additive approximation, and maximizing some variants of global welfare.

In fact, all our algorithms are designed for a more general MMS problem in machine scheduling. 
Here, each mixed-manna item (job) must be assigned to an
agent (machine) and has a processing time and a deadline.
This generalization is of independent interest as it induces
models for important applications such as ride-hailing services and shift scheduling.
We develop efficient algorithms running in FPT time. Formally, we present polynomial time algorithms w.r.t.\ the input size times some function dependent on the parameters that yield optimal maximin-value approximations (among others) when parameterized by \\
(i) the number of items, \\
(ii) the maximum number of items in any group (each group is independently scheduled/assigned on the machines/agents and only the overall valuation ties them together), the number of agents, and the maximum absolute valuation, \\
(iii) the number of agents, the number of different deadlines, the maximum processing time, and the maximum absolute valuation.

To obtain these results, we utilize machinery from graph theory, dynamic programming, and integer programming. In particular, for the latter, we employ the framework of $n$-fold integer programs -- these are integer programs with a specific block structure that allows them to be solved in FPT time. 
Our results are tight.
We prove that \emph{any} stronger parameterization is NP-hard even if all parameters only have constant values. Thus, assuming P $\neq$ NP, there exists no FPT time algorithm for the respective stronger parameterization. Further, the running time in (i) is tight assuming the exponential time hypothesis. 
Consequently, this work gives a complete complexity theoretical picture and efficient algorithms for fairly allocating indivisible, mixed-manna items under the notion of maximin share fairness.
\end{abstract}

\section{Introduction}
Take the classical {\em job scheduling} problem, where we are given
a collection of $n$ jobs with individual processing times and deadlines.
What is the best way to assign the jobs to a set of machines?  
This question has been intensively studied, but we want to go even further: can we find an assignment of the jobs in a manner that is {\em fair} to the machines? 
Here the reader may protest: {\em android rights} might be an important ontological conundrum but, despite the recent spectacular advances in artificial intelligence, a dilemma that perhaps one can defer to the future? 
However, exchange machines for people and suddenly the job scheduling problem with deadlines proffers a simple model for huge chunks of the gig economy today where
fairness does matter. Consider for example {\em ride-hailing systems}
which generate tens of billions of rides annually worldwide.
To date, the primary objectives in the allocation of rides to drivers have been
consumer satisfaction and maximizing sales volume and platform 
profitability. Fairness to drivers has been, at best, a secondary consideration,
leading to large disparities in driver satisfaction and 
surveys suggesting low wages; for example, Zoepf et. al~\cite{ZCA18} 
declare up to three-quarters of Uber drivers earn below the minimum wage and one-third obtain incomes insufficient to cover costs and vehicle depreciation. Similar fairness issues arise for {\em food delivery platforms}~\cite{SDC23}. Consequently, the design of fairer job 
allocation mechanisms for applications in the gig economy is now receiving
considerable attention~\cite{SBZ19, LZB19, BH20, SJY22, GNC22}.

This motivates our study of fair job scheduling with deadlines.
Of course, this begets the question of what exactly is ``fair''.
We take a theoretical approach and use the fairness measure of {\em maximin shares} due to Budish~\cite{Bud11} and, more specifically, {\em optimal maximin shares} due to Aziz et al.~\cite{ARS17}.
Our objective is to provide a parameterized characterization 
of exactly when polynomial times algorithms exist to find fair
schedules. Let's begin by formally describing the problem.

\subsection{The Model}
In the job scheduling with deadlines problem, there is a set $\machines$ of $m$ machines and set $\jobs$ of $n$ jobs.
Each job $j_t$ has a processing time $p_t\in \mathbb{Z}^+$ and a deadline
$d_t\in \mathbb{Z}^+$. 
An allocation $(\jobs_1,\jobs_2,\dots, \jobs_m)$ 
is a partition of the jobs over the machines. That is, a {\em bundle} of $\jobs_i$ of jobs is allocated to machine $m_i$, where $\jobs_{i}\cap\jobs_{j}=\emptyset$ for all $i\neq j \in [m]$ and $\bigcup_{i\in [m]}\jobs_{i}=\jobs$. The allocation is {\em feasible} if every job is completed before its deadline. Bundles can easily be checked for feasibility using the {\tt earliest deadline first} (EDF) rule, which sorts the jobs in each bundle according to their deadlines and places them in that order on the machine. Specifically, there exists a sequential ordering the jobs on each machine $m_i$ such that, for every job $j_t\in \jobs_i$, the sum of the processing times of the jobs before and including $j_t$ is at most the deadline $d_t$. We denote by $\mathcal{F}$
the set of feasible allocations.

To evaluate the fairness of an allocation, let machine $m_i$ have a valuation $v_i(j_t)\in \mathbb{Z}$ for job $j_t$.
We allow for {\em mixed-manna} valuations: the job may either be
a {\em good} where $v_i(j_t)\ge 0$ or a {\em chore} ({\em bad}) where $v_i(j_t)<0$.
Note that a job may simultaneously be a good for one machine but a chore for
another machine. We assume the agents have additive valuations functions;
that is, $v_i(\jobs_i) = \sum_{j\in \jobs_i} v_i(j)$.

We remark that the job scheduling with deadlines problem generalizes the 
classical fair division problem where $n$ {\em items} must be allocated among $m$ {\em agents}. To see this, simply associate each agent with a machine and each item with a job.\footnote{Typically in the fair division problem there are $m$ items and $n$ agents. We switch this notation to be consistent with the scheduling literature where there are $n$ jobs and $m$ machines.}
Now set $p_t=d_t=0$, for every item/job $j_t$. Then any allocation is feasible and we may assign the items to agents in any way we desire.

As stated, the fairness measure we use is that of maximin shares~\cite{Bud11}.
Assume machine (agent) $m_i$ partitions the jobs (items) into 
$m$~bundles that form a feasible allocation. The other machines choose one bundle each, in turn, and then machine $m_i$ receives the last remaining bundle. Intuitively, a risk averse machine (agent) desires a feasible allocation that maximizes the value of its least desired bundle in the allocation. The minimum value of a bundle in the optimal partition is called the {\em maximin share value} for the machine. Formally
$$\text{MMS}_i=\max_{(\jobs_1,\ldots,\jobs_m)\in \mathcal{F}} \, \min_{k\in [m]} \, v_i(\jobs_k).$$
As the machines have different valuation functions, the feasible allocations they each propose would differ. Consequently their maximin share values differ in general.

We say that $(\jobs_1,\dots, \jobs_m)\in \mathcal{F}$
is an $(\alpha_1,\dots, \alpha_m)$-{\em (multiplicative) maximin shares allocation} if, for every machine $m_i$,
$$
v_i(\jobs_i) \ge
\begin{cases}
\alpha_i\cdot MMS_i &\quad \text{if } MMS_i\ge 0,\\
\frac{1}{\alpha_i}\cdot MMS_i &\quad \text{if } MMS_i< 0.
\end{cases}
$$
For the special case of $(\alpha_1,\dots, \alpha_m)=(1,\dots, 1)$ we say that $(\jobs_1,\dots, \jobs_m)\in \mathcal{F}$ is a {\em maximin shares allocation}.
The {\em maximin shares problem}, denoted MMS, asks whether or not a maximin shares allocation exists.
In fact, a maximim shares allocation need not exist even in the basic setting without scheduling constraints~\cite{KPW18}.
Hence, in this paper, the objective is to find a feasible allocation that
optimizes the approximation with respect to three natural objective functions. The first is
\begin{enumerate}
    \item[(1)] the {\em optimal (multiplicative) maximin shares problem}, denoted as \multmms MMS, where the aim is to find a feasible allocation that
maximizes the minimum of the $\alpha_i$:
$$\max_{(\jobs_1,\ldots,\jobs_m)\in \mathcal{F}} \, \min_{i\in [m]} \alpha_i$$
\end{enumerate}
This objective was proposed by Aziz et al.~\cite{ARS17} and can be interpreted as {\em egalitarian (Rawlsian) welfare}, assuming that 
the utilities are normalized by their maximin share values.
The remaining two objectives correspond to {\em social welfare}. But the correspondence becomes cleaner when we switch to additive rather than 
multiplicative bounds. 

Accordingly,
we say $(\jobs_1,\dots, \jobs_m)\in \mathcal{F}$
is a $(\delta_1,\dots, \delta_m)$-{\em (additive) maximin shares allocation} if, for every machine $m_i$,
we have $\delta_i\ge 0$ and $v_i(\jobs_i) \ge MMS_i - \delta_i$.
This induces the following two optimization problems:
\begin{enumerate}
    \item[(2)] the \emph{optimal (additive) maximin shares problem}, denoted as \addmms MMS, where the aim is to find a feasible $(\delta_1,\dots, \delta_m)$-(additive) maximin shares allocation that minimizes the maximum of the $\delta_i=\max\{(MMS_i-v_i(J_i)) ,  0\}$:
    $$\min_{(\jobs_1,\ldots,\jobs_m)\in \mathcal{F}} \, \max_{i\in [m]} \delta_i$$
\item[(3)]  the {\em optimal (welfare) maximin shares problem}, denoted as \wfmms MMS, where the 
aim is to find a feasible $(\delta_1,\dots, \delta_m)$-(additive) maximin shares allocation that minimizes the sum of the $\delta_i=\max\{(MMS_i-v_i(J_i)) ,  0\}$:
$$\min_{(\jobs_1,\ldots,\jobs_m)\in \mathcal{F}} \, \sum_{i\in [m]}  \delta_i$$
\end{enumerate}
Given our scheduling framework, we colloquially refer to this set of problems as MMS, \multmms MMS, \addmms MMS, and \wfmms MMS scheduling with deadlines. We note a connection with the Santa-Claus problem, which corresponds to a scheduling problem with target $T$ on unrelated machines (machine-dependent processing times), where valuations equal processing times, and which asks for an assignment of tasks such that every machine exceeds load $T$. However, our problems differ as here processing times and valuations are {\em independent}.

We remark that our mechanism also allows for the weighting of agents. So let us briefly comment on the distinction between multiplicative and additive guarantees and on weighting agents. In approximation theory, additive guarantees are typically harder to achieve than multiplicative guarantees and are, thus, considered preferable. However, an allocation mechanism based upon additive approximation is incompatible with scale-freeness, opening up the possibility that an agent may manipulate it by scaling its valuation function. In practice, though, most allocation mechanisms are insulated against such manipulations. The most natural way this is accomplished is where the mechanism itself ``scales'' the valuation functions, e.g, valuation functions may be prescribed to be constant-sum. Notable illustrations of this are sports drafts, such as bidding budgets in the IPL auction in cricket or total salary caps in US sports. Generally, bidding and/or priority allocation mechanisms do scale/weight agents e.g. in assigning students to courses, job shifts to employees, or landing slots to airlines. Priority-scaling is also applicable in e.g. estate division, and budget-dependent-scaling in e.g. room allocation in shared housing. 

However, these allocations problems are all hard. Indeed it is hard to just compute the maximin share value $\MMSi$ for a single machine $m_i$~\cite{Woe97, BL16, ARS17}. The purpose of this paper is to circumvent these discouraging hardness results using a parameterized complexity approach. Indeed, not only do we provide efficient algorithms to compute the maximum share values but also 
efficient algorithms to compute allocations that solve the 
optimal maximin shares and maximum welfare maximin shares problems.

We consider six natural parameters associated with the 
job scheduling with deadlines problem.
The two most basic are the number of jobs $n$ and
the number of machines $m$.  
Three more are $p_{\max}$, the maximum processing time of any job, $d_{\max}$, the maximum deadline of any job, and $v_{\max}$,
the maximum absolute value any machine has for a job.
In fact, $p_{\max}\le d_{\max}$. Therefore, rather than $d_{\max}$, we instead consider
the parameter $\#d$, the number of distinct deadlines; this is a stronger parameterization because
$\#d \le d_{\max}$.
The final parameter is $g_{\max}$, the maximum number of jobs in any group of jobs, motivated by applications where jobs inherently belong to distinct groups that can be scheduled independently. For example,
suppose that jobs must be scheduled within distinct time-shifts, such as days. Then jobs within the same day form a group (implicitly, here the jobs have an arrival time as well as a deadline that fall within the same day).
We remark, that whilst groupings simplify the ``daily'' scheduling problem, the fairness criterion remains complex as it must apply as a sum over all days (groups).

With the parameterization framework complete, we may now describe our results.

\subsection{Our Results}

An algorithm is {\em fixed parameter tractable} (FPT) if it runs in time polynomial in the input size $|I|$ multiplied by a function of the
specified parameters. In this paper, we classify exactly when FPT algorithms exist for the MMS scheduling with deadlines problem, that is, efficiently compute allocations that optimize $(\alpha_1, \dots, \alpha_m)$ 
or $(\delta_1, \dots, \delta_m)$ for our three optimal maximin shares problems, \multmms MMS, \addmms MMS, and \wfmms MMS. Recall our six parameters are
$\{n, m, \#d, p_{\max}, v_{\max}, g_{\max}\}$.
That is, in order, the number of jobs, 
the number of machines, the number of distinct deadlines, the maximum processing time of any job, the maximum absolute value, and the maximum number of jobs in a group.

We may state our main results concisely using two "meta-theorems".
The first gives three parameterization settings where FPT algorithms exist.
\begin{metatheorem}\label{meta:1}
The MMS, \multmms MMS, \addmms MMS, and \wfmms MMS job scheduling with deadlines problems are fixed-parameter tractable when parameterized by (i) $n$, (ii) $\{m, v_{\max},g_{\max}\}$, or (iii) $\{m, \#d, p_{\max}, v_{\max}\}$.
\end{metatheorem}
As key tools in the design of algorithms for these problems we use maximal matchings, dynamic programming and the theory of $n$-fold integer programming.
We further outline adaptations to our frameworks to admit the rejection of jobs, which incurs some job dependent costs.

The second meta-theorem states that these three are the only minimal parameterizations. {\em Any} stronger parameterization, that is, parameterizing upon a strict subset of the parameters, is NP-hard even if the parameters have constant value. 
\begin{metatheorem}\label{meta:2}
Assuming $P\neq NP$, no fixed-parameter tractable algorithms exist for 
any stronger parameterization than (i) $n$, (ii) $\{m, v_{\max},g_{\max}\}$, or (iii) $\{m, \#d, p_{\max}, v_{\max}\}$ for the MMS, \multmms MMS, \addmms MMS, and \wfmms MMS job scheduling with deadlines problems.
\end{metatheorem}

Thus, assuming P $\neq$ NP, there exists no FPT time algorithm for the respective smaller set of parameters.
We show this by giving reductions from NP-hard problems to our respective problem such that the values of the parameters are constants. This proves that our FPT time algorithms are designed to use the \emph{minimal} number of parameters necessary to ensure fixed-parameter tractability. We enrich this line of results by proving that the running time for parameter $n$ is tight for each of the problems assuming the exponential time hypothesis (see Theorem~\ref{thm:ethLB}).

Observe that our polynomial time algorithms do allow 
either $n$ or $m$ to be large (but not both). This condition holds for most of the fair division applications described above. For example, in flight scheduling the number of landing slots $m$ is large but the number of airlines $n$ is small.

Overall, this work offers an understanding of the complexity theoretical landscape and efficient algorithms for fairly allocating indivisible, mixed-manna items under the notion of maximin share fairness enriched by scheduling constraints.

\subsection{Related Work}
The problem of fair division was formally introduced by Steinhaus~\cite{Ste48}. Classical works focused on the case of divisible items and {\em cake-cutting}~\cite{BK96,RW98}.
More recently, for practical reasons, there has been a plenitude of works 
on the practical case of indivisible items. For surveys, we refer the reader to~\cite{Aziz20, ALM22, AAB23, AT20}.

Among the fairness measures in the literature of particular relevance here is {\em proportionality}~\cite{Ste48}. An allocation of the items to the agents is {\em proportional} if, for every agent, the value that the agent has for the {\em grand bundle} (all of the items) is at most a factor $n$ times greater than the value it has for the bundle it receives. 
It is, though, easy to show that proportional allocations need not 
exist. Maximin shares fairness~\cite{Bud11} relaxes the concept of proportionality and originates with the work of Hill~\cite{Hill87}.
In fact, the idea behind maximin shares dates back millennia, at least as
far as the division of the land of Canaan. Indeed,   
maximin shares is a multiagent generalization of the two agent ``{\em I divide, you choose}" protocol implemented by Abraham and Lot~\cite{bible}.

Maximin share allocations, where every agent receives a bundle
whose value is at least its maximin share, are guaranteed to 
exist in the special case of two agents with additive valuations\cite{BL16}.
They also exist given assorted strict restrictions on the valuation functions~\cite{BL16, AMN17, EPS22}.
But, for general multiagent settings, maximin share allocations need not exist, even for additive valuation functions~\cite{KPW18, FST21}.
Given this negative result, Procaccia and Wang~\cite{PW14}
proposed studying {\em approximate} maximin share allocations. 
A line of research searched for the best possible approximation guarantees~\cite{KPW18, AMN17, BK20, GT21} with 
the best factor currently due to Akrami and Garg~\cite{AG24}.

Rather than approximations to maximin share values, Aziz et al.~\cite{ARS17} propose study of the {\em optimal maximin share value}.
Specifically, find the maximum $\alpha$, and an associated allocation, such that every agent receives a bundle worth at least an $\alpha$ factor of its
maximin share value. This {\em optimal maximin share allocation} is, by definition, guaranteed to exist. We remark that~\cite{ARS17} study the case of chores and note a link between maximin shares and parallel machine scheduling, when processing times equal values.

However, in general, it is hard to find an optimal maximin share allocation. 
Indeed, it is NP-complete to even compute
the maximin share value for a single agent~\cite{BL16, ARS17},
but a PTAS exists due to Woeginger \cite{Woe97}.
When the number of agents is {\em fixed}, a polynomial time approximation scheme (PTAS) for finding an
optimal maximin share allocation for chores can be obtained~\cite{ARS17}.
The paper \cite{HNN18} extends this to other fairness measures. 
The problem becomes even harder under additional constraints, such as
the maximim shares problem on a graph, where the items are vertices and feasible bundles must be connected components, ~see \cite{LT20, GS20}.

Kulkarani et al.~\cite{KMT21}, like us, consider the general case of mixed-manna. They study the optimal maximin share allocation problem with the
addition requirement of Pareto optimality. For restricted instances, they provide polynomial time approximation schemes. 

What about polynomial time {\em exact} algorithms?
Bleim et al.~\cite{BBN16} design FPT algorithms to compute Pareto efficient and envy-free allocations. But, in general, our knowledge on  
the topic of efficient algorithms is sparse.
To remedy this, our work is inspired by recent theoretical studies on $n$-fold integer programming~\cite{AschenbrennerH07,CslovjecsekEHRW21,CslovjecsekEPVW21,DeLoera2008,EisenbrandHK18,HemmeckeOR13,JansenLR20,KouteckyLO18}.
These $n$-fold integer programs have found wide application; see~\cite{ChenMYZ17,DeLoera2008,GavenciakKK22,JansenKMR22,KnopK18,KnopKLMO21,KnopKM20,KnopKM20b} for illustrative examples.
Of particular relevance here, Bredereck et al.~\cite{BKK19} propose the use of $n$-fold integer programs in computing fair allocations, including an FPT algorithm for the optimal maximin share allocation parameterized by $\{m, v_{\max}\}$. We continue the success story of using $n$-fold IPs for fair allocation problems, and greatly extend the results of~\cite{BKK19}.
We emphasize that this needs careful crafting of new integer programs, as the $n$-fold IPs of Bredereck et al. cannot be extended to solve our problem. In fact, we prove the parameterization provided in~\cite{BKK19} is not strong enough to make our more general problem tractable. We overcome several limitations of their model. In particular, (i)~\cite{BKK19} studies the relaxed case where not every item/job needs to be allocated, making the problem much easier, (ii)~they cannot handle mixed-manna valuations, and (iii)~they only show how to solve the case for MMS shares of value $1$, but cannot approximate the value, neither multiplicatively nor additively.

\section{Overview}
In this section, we give an overview of the paper and the techniques within.
The proof of Meta-Theorem~\ref{meta:1} consists of the design of three
FPT algorithms for the parameterizations $\{n\}$, $\{m, v_{\max},g_{\max}\}$ and $\{m, \#d, p_{\max}, v_{\max}\}$. These algorithms are given
in Sections~\ref{sec:n}, \ref{sec:mvnmax} and \ref{sec:mdvp}, respectively. The hardness results required to prove Meta-Theorem~\ref{meta:2}
are presented in Section~\ref{sec:hardness}. 
Then in Section~\ref{sec:extensions} we explain how our results extend to even more general job scheduling problems. These include applications where
it is not necessary that every job is scheduled, but where each job incurs penalty cost if it is discarded. 
We conclude in Section~\ref{sec:EF} with explaining why other measures such as envy-freeness and its relaxations are not conducive to FPT analyses for the job scheduling problem.

It is beneficial to briefly outline our approaches for proving
Meta-Theorems~\ref{meta:1} and~\ref{meta:2}. 
For all algorithms, we first assume that we are given a value for each machine that we need to either satisfy with the schedule, or show that it is not possible to achieve for all machines simultaneously. This is later used to compute the MMS values for each machine, and to solve the optimal maximin shares (\multmms MMS), optimal additive maximin shares (\addmms MMS), and optimal welfare maximin shares problems (\wfmms MMS).

Let us begin with the FPT algorithm with parameter $n$, the number of items. For the \multmms{} and the \addmms \probName, we present a dynamic progam that iterated through the number $k$, $k \in [m]$ of machines (in any order) and every possible subset $S \subset \jobs$ of jobs and checks, whether $S$ can be scheduled on the first $k$ machines achieving the desired MMS values. 

As for the \wfmms \probName{}, we apply elementary counting arguments, as having $n$ as a parameter, we can exhaustively test each {\em partition} of the jobs into \emph{bundles} $\jobs_1,\ldots,\jobs_m$.
We first test if the partition is feasible, that is, whether each bundle induces a schedule that respects the job deadlines.
If the partition is indeed feasible, it remains to assign the bundles to machines. That is, to turn the partition into a schedule.
To do this, we build a bipartite graph $G = (\{A, B\}, E)$
where the vertices in $A$ are the machines and the vertices in $B$ are the bundles.
The edge $(a,b), a \in A, b \in B$ has weight $w_{a,b} = \sum_{j_t \in b} v_a(j_t)$,  the valuation that machine $a$ has for bundle $b$. If the weight is below the desired valuation value for the machine
then we omit the corresponding edge from the graph. If the graph contains a perfect matching, each machine obtains the
desired target value. Otherwise, if this test fails for all partitions, then no 
suitable allocation exists for the desired MMS values.

Our FPT algorithms for the parameterizations $\{m, v_{\max},n_{max}\}$ and $\{m, \#d, p_{\max}, v_{\max}\}$ are more complex and utilize $n$-fold integer programming. Such $n$-fold IPs are a special subclass of integer programs whose feasibility constraints are of the form
\begin{align}
    \sum_{i=1}^{n} C_i y_i &\leq a \label{global} \\ 
    D_i y_i &\leq b_i \quad\forall i=1, \dots, N \label{local}
\end{align}
where $C_i \in \mathbb{Z}^{r \times t}$ and $D_i \in \mathbb{Z}^{s \times t}$
are matrices\footnote{If the matrices have fewer columns or rows then we can pad them with zeros to obtain this property.}, and $y = (y_1, \dots, y_N)$ is the sought solution.
In other words, if we delete the first $r$ constraints (\ref{global}) $\{\sum_{i=1}^{n} C_i y_i\le a\}$, called the \emph{global} constraints, then the problem decomposes into $N$ independent programs (\ref{local}) $\{D_i y_i \leq b_i, \forall i\}$, given by the \emph{local} constraints. Note that we use $N$ for the number of blocks in an IP, and $n$ for the number of jobs in the instance; however, we construct the $n$-fold IP such that $N=n$.

Now denote by $\Delta$ the largest absolute value in the $C_i$ and $D_i$ matrices. Cslovjecsek et al.~\cite{CslovjecsekEHRW21} proved $n$-fold IPs can be solved in time 
$(Nt)^{(1+o(1))} \cdot 2^{O(rs^2)}\cdot(rs\Delta)^{O(r^2s+s^2)}$.
Observe that this is an FPT algorithm when parameterized by the dimensions of the matrices $C_i$, $D_i$, and their largest entry $\Delta$. 
This means that the goal is to formulate our problems as $n$-fold integer programs where the dimensions of the $C_i$ and $D_i$ matrices, as well as the largest absolute value $\Delta$ are only dependent on our specified parameters. 

Consider now the parameterization $\{m, v_{\max},g_{\max}\}$ for an instance with groups, which we call \emph{layers}. For motivation here, let each layer consist of jobs that have to be scheduled on the same day. Thus each day (and thereby its jobs) are independent from each other except for calculating the overall sum of values to each machine. Hence, we handle the scheduling constraints, such as each job being feasibly assigned to exactly one machine for each group, using the local constraints~(\ref{local}). As the maximum number of jobs in a group $g_{\max}$ and the number of machines $m$ are parameters, the $D_i$ remain small enough in dimension. Given our target values for each machine, 
the global constraints~(\ref{global}) only encode the requirement that 
each machine achieves its target value over all days. This gives us $m$ global constraints defined by the $C_i$, in which the largest absolute valuation appears as a factor. Solving this $n$-fold IP gives the desired schedule for our problem.

Next, take the parameteriztion on $\{m, \#d, p_{\max}, v_{\max}\}$,  
the number of machines, the number of distinct deadlines, the maximum processing time, and the maximum absolute valuation. 
Here, the necessary constraint that each job needs to be assigned to a machine can be handled independently of the other jobs using the local constraints (\ref{local}). The few global constraints check for every machine whether or not the jobs allocated to them can be feasibly scheduled (introducing the maximal processing time and the number of different deadlines as factors) and whether or not the assigned jobs achieve the desired target value (introducing the largest absolute valuation as a factor).

One core strength of the algorithmic results given above is their adaptability. By embedding the algorithms into different frameworks, we can achieve several different objectives, all in FPT time. On a basic level, we can maximize the lowest value attained by a machine. This directly allows us to calculate the maximin-share of every machine. With these values in hand, we can decide whether it is possible, in the given instance, for each machine to obtain its maximin-shares simultaneously, resulting in a fair division of jobs in the mixed-manna setting. THis can be extended to solve the \addmms and \wfmms approximations. Further, by embedding the algorithms in a binary search framework, we can even search for the optimal values for \multmms\probName.

Meta-Theorem~\ref{meta:2} states that the minimal parameterizations 
of six parameters that are compatible with fixed parameter tractability, $\{n, m, \#d, p_{\max}, v_{\max}, g_{\max}\}$,  are the three described.
We prove this in Section~\ref{sec:hardness} by giving 
NP-hardness reductions for any stricter parameterizations.

\section{FPT algorithm with parameter \texorpdfstring{$n$}{n}}\label{sec:n}

Let us start with the smallest parameter set that gives an FPT algorithm for each of the problems, namely, parameterizing just by the number of jobs $n$. We first present a dynamic program that decides whether the targeted MMS values $\fairValue_i$ for each machine $m_i$ are achievable. A similar approach has been used in~\cite{JansenLL13}. This subroutine is then used to compute the MMS values for each machine, and can be adapted to solve the \addmms \probName{} and the \multmms \probName. For the \wfmms \probName, we cannot guess the MMS values based on the approximation guarantee because they are dependent on each other. Consequently, we need a more powerful, but slower approach: we construct a weighted graph that assigns the value of each possible bundle to machines, and solve the maximum matching problem on this graph efficiently.

\begin{theorem}
    When parameterized by $n$, we can decide the \probName{} for the targeted $MMS$-value $\fairValue_i$ for every machine $m_i$ in FPT time $2^{O(n)}\cdot \text{poly}(|I|)$. \label{theo:FPTdpn}
\end{theorem}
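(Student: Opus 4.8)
The plan is to solve the decision problem by a subset dynamic program over the jobs, processing the machines one at a time. Because feasibility of a bundle depends only on the jobs it contains (processing times and deadlines are job-specific, not machine-specific), I would first precompute, for every subset $B \subseteq \jobs$, whether $B$ is a \emph{feasible} bundle by running the EDF rule: sort the jobs of $B$ by deadline and verify that each prefix sum of processing times respects the corresponding deadline. This costs $2^n \cdot \mathrm{poly}(n)$ in total. In the same pass I would tabulate the valuations $v_k(B) = \sum_{j_t \in B} v_k(j_t)$ for every machine $m_k$ and subset $B$, and mark $B$ as \emph{good for $m_k$} exactly when it is feasible and satisfies $v_k(B) \ge \fairValue_k$.

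The core object is a boolean table $T[k][S]$ indexed by a number of machines $k \in \{0,1,\dots,m\}$ and a subset $S \subseteq \jobs$, with the intended meaning that $T[k][S]$ is \textbf{true} iff the jobs in $S$ can be partitioned into feasible bundles assigned to the first $k$ machines so that every machine $m_i$, $i \le k$, meets its target $\fairValue_i$. The base case is $T[0][\emptyset] = \textbf{true}$ and $T[0][S] = \textbf{false}$ for $S \ne \emptyset$, and the transition is
\[
T[k][S] \;=\; \bigvee_{\substack{B \subseteq S \\ B \text{ good for } m_k}} T[k-1][S \setminus B].
\]
Since the recursion only descends on the complement $S \setminus B$, the bundles chosen for distinct machines are automatically disjoint, which enforces that the assignment is a genuine partition; requiring $S = \jobs$ in the final lookup enforces that every job is scheduled. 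The answer to the decision problem is then $T[m][\jobs]$. Correctness follows by a straightforward induction on $k$: any allocation meeting the targets restricts to a partial one on the first $k$ machines using exactly the jobs placed there, and conversely any partial assignment witnessed by the table extends to the full one. Empty bundles ($v_k(\emptyset)=0$, trivially feasible) are handled by the case $B = \emptyset$, which covers $m > n$ and targets $\fairValue_k \le 0$.

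For the running time, the dominant cost is filling one layer of the table. For fixed $k$, summing the work over all states and all candidate bundles gives $\sum_{S \subseteq \jobs} 2^{|S|} = \sum_{i=0}^{n} \binom{n}{i} 2^i = 3^n$ elementary operations, each a $\mathrm{poly}(|I|)$-time table lookup once the good-bundle data is precomputed. Over all $m \le |I|$ machines this is $m \cdot 3^n \cdot \mathrm{poly}(|I|) = 2^{O(n)} \cdot \mathrm{poly}(|I|)$, using $3^n = 2^{(\log_2 3)\,n}$. The main obstacle I anticipate is controlling precisely this exponential factor: a naive transition that, for each state $S$, re-examines all $2^n$ subsets would cost $4^n$, so the point is to iterate only over subsets $B \subseteq S$, i.e.\ over pairs with $B \subseteq S$, which is exactly what collapses the bound to $3^n$. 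The remaining care is bookkeeping, namely executing the EDF feasibility precomputation and the valuation tabulation within the claimed budget.
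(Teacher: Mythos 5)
Your proposal is correct and follows essentially the same route as the paper: a boolean subset dynamic program $T[k][S]$ over prefixes of machines and subsets of jobs, with the transition guessing the (feasible, target-meeting) bundle for machine $m_k$ and recursing on the complement. The only differences are cosmetic — your $3^n$ subset-pair count is a slightly sharper accounting than the paper's per-cell bound, but both land in $2^{O(n)}\cdot\mathrm{poly}(|I|)$.
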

\begin{proof}
    We provide a dynamic program that calculates the MMS of all machines as follows: we construct a table $T$ containing $m$ rows and one column for every possible subset of the $n$ jobs. Table $T$ has size $m\times2^n.$ A cell $A[k,S]$ in $T$, representing the first $k$ machines and a subset of jobs $S$, receives value $A[k,S]=true$ if there is a schedule assigning a job set $\jobs_i$ to machine $m_i$ for $i \in [k]$, such that $\bigcup_{i=1}^k\jobs_i=S$ and $\jobs_i \cap \jobs_j = \emptyset$ for all $i \neq j$ and $v_i(\jobs_i)\geq \fairValue_i$ for all $i\in [k]$. Otherwise, we assign the value $A[k,S]=false$.
    
    We introduce a decision function $d_i: S \rightarrow \{$true, false$\}$ for every machine $m_i$ and subset $S\subseteq \jobs$ of jobs as $$d_i(S)=\begin{cases}
        \text{true} & \text{if} \sum_{j_t \in S} v_i(j_t) \geq \fairValue_i \text{\, and \,} S \text{\, is feasible}\\
        \text{false} & \text{otherwise.}
    \end{cases}$$

    We initialize the first row of $T$ with $A[1,S] = d_1(S)$ for all $S\subseteq \jobs$. We fill the table row-wise by calculating the value of $A[k,S]$ for $k\geq2$ as: $$
    A[k,S] = \begin{cases}
        \text{true} & \exists S' {\, s.t.\,} A[k-1,S\setminus S']=\text{\,true\,} \text{\,and\,}d_k(S') = \text{\,true}\\
        \text{false} & \text{otherwise.}
    \end{cases}$$
    
 For a single machine, the functions $d_i$ clearly output the correct result. Considering $k\geq 2$ and a subset of jobs $S$, the recurrence equation checks whether there is a partition $\{S', S\setminus S'\}$ such that $S'$ is feasible for machine $m_k$ and the remaining jobs $S\setminus S'$ can be feasibly assigned to the remaining $k-1$ machines (using entry $A[k-1,S\setminus S']$ in Table $T$). 

 Calculating the value for one cell takes time $2^n \cdot n$ as we need to try all possible subsets $S' \subset S$, calculate $d_k(S')$, and look up the value for $A[k-1,S\setminus S']$. Given that $T$ has a size of $2^n \times m$, we get an overall running time of $2^{O(n)}\cdot m$.
    
As the input has to encode the valuation functions of all machines, even if $m > n$, we have $|I|= \Omega(m).$ Therefore, the algorithm has a running time of $2^{O(n)}\cdot \text{poly}(|I|).$
\end{proof}

The dynamic program formulation given above can be implemented via a binary search framework to compute the MMS value of a machine. However, we improve the running time by adapting the formulation of the dynamic program to calculate a machines' MMS value directly. 
\begin{lemma}
    The MMS value $\MMSi$ for a machine $m_i$ can be computed in time $2^{O(n)} \cdot\text{poly}(|I|)$. \label{lem:dpnmms}
\end{lemma}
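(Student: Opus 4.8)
The plan is to adapt the dynamic program from Theorem~\ref{theo:FPTdpn} so that, instead of checking feasibility against a fixed target value $\fairValue_i$, it directly computes the maximin share value $\MMSi$ for a single fixed machine $m_i$. Recall that $\MMSi = \max_{(\jobs_1,\dots,\jobs_m)\in\mathcal{F}} \min_{k\in[m]} v_i(\jobs_k)$, so we must find the best feasible partition of all $n$ jobs into $m$ bundles that maximizes the value of the least-valued bundle, all evaluated under the single valuation function $v_i$. The key observation is that since only machine $m_i$'s valuation matters, the bundles are interchangeable: we are partitioning $\jobs$ into $m$ feasible bundles and maximizing the minimum $v_i$-value among them.

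First I would define a table $B[k,S]$ indexed by the number of bundles $k\in[m]$ already formed and the set $S\subseteq\jobs$ of jobs used so far. The entry $B[k,S]$ stores the \emph{maximum over all ways of partitioning $S$ into $k$ feasible bundles of the minimum $v_i$-value attained by any of those $k$ bundles}; if no partition of $S$ into $k$ feasible bundles exists, set $B[k,S]=-\infty$. The base case is $B[1,S]=v_i(S)$ whenever $S$ is feasible (checkable via the EDF rule) and $-\infty$ otherwise. The recurrence, for $k\geq 2$, is
\begin{equation}
B[k,S] = \max_{\substack{S'\subseteq S\\ S' \text{ feasible}}} \min\bigl\{\, B[k-1,S\setminus S'],\; v_i(S') \,\bigr\}.
\end{equation}
Here $S'$ plays the role of the $k$-th bundle, and we take the worse of its own value and the best achievable minimum over the remaining $k-1$ bundles. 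The desired answer is $\MMSi = B[m,\jobs]$, the value of partitioning all jobs into exactly $m$ feasible bundles so as to maximize the smallest bundle value.

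For the running-time analysis I would argue exactly as in Theorem~\ref{theo:FPTdpn}: the table has $m\cdot 2^n$ entries, and computing each entry requires iterating over all subsets $S'\subseteq S$ (at most $2^n$ of them), for each evaluating feasibility of $S'$ in $\mathrm{poly}(|I|)$ time, computing $v_i(S')$, and looking up $B[k-1,S\setminus S']$. This yields a total running time of $2^{O(n)}\cdot\mathrm{poly}(|I|)$, matching the claimed bound. Compared to wrapping Theorem~\ref{theo:FPTdpn} in a binary search over candidate target values, this direct formulation avoids the extra logarithmic factor and the need to enumerate candidate thresholds.

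The main obstacle I anticipate is establishing correctness of the recurrence, specifically verifying that the $\max$--$\min$ decomposition genuinely captures the maximin objective over all feasible $m$-partitions. The subtle point is that taking $\min\{B[k-1,S\setminus S'],\,v_i(S')\}$ correctly combines the contribution of the newly chosen bundle $S'$ with the optimal min-value of the remaining partition; one must confirm by a standard exchange/induction argument that an optimal $k$-partition of $S$ can always be obtained by fixing some feasible bundle $S'$ as the last bundle and partitioning $S\setminus S'$ optimally into $k-1$ feasible bundles. Since feasibility of each bundle is independent (each bundle is scheduled on its own machine under EDF) and the $v_i$-values are additive, this decomposition holds, and the induction on $k$ goes through cleanly. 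A minor point to note is that every job must be assigned, so we require $B[m,\jobs]$ with $S=\jobs$ and exactly $m$ bundles, which the recurrence enforces automatically.
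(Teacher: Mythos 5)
Your proposal is correct and follows essentially the same approach as the paper: the paper's proof also adapts the dynamic program of Theorem~\ref{theo:FPTdpn} with entries $A[k,S]$ storing the optimal min-bundle value under $v_i$, the base case $f_i(S)$ (which equals $v_i(S)$ for feasible $S$ and $-\infty$ otherwise), and the identical recurrence $\max_{S'\subseteq S}\min\{A[k-1,S\setminus S'],f_i(S')\}$ evaluated at $A[m,\jobs]$. Your explicit restriction to feasible $S'$ is equivalent to the paper's use of $-\infty$ for infeasible bundles, and the running-time analysis matches.
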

\begin{proof}
    We adapt the dynamic program from \cref{theo:FPTdpn} to calculate the MMS values instead of just checking for feasibility. Now, a cell $A[k,S]$ represents the optimal MMS value when scheduling only jobs in the subset $S$ on the first $k$ machines. As we are interested in the MMS value for machine $i$, we set the valuation function of all machines to $v_i$.
    
    First, we introduce a function $f: S \rightarrow \mathbb{N} \cup -\infty$ that calculates the MMS value for a machine with valuation function $v_i$ for a subset of jobs $S$. If the set is feasible, this is simply the sum of valuations of jobs in the subset $S$. However, if $S$ is not a feasible schedule due to the jobs deadlines, we assign a value of $-\infty$ to denote this, yielding$$f_i(S)=\begin{cases}
        \sum_{j_t \in S} v_i(j_t) &\text{if $S$ can be scheduled feasibly on a machine}\\
        -\infty &\text{else.}
    \end{cases}$$

    As before, we initialize the table by its first row as $A[1,S] = f_i(S)$ for all subsets $S$ of the jobs.
    
    We fill in Table $T$ row-wise for $k\geq 2$: $$A[i,S] =
           \max_{S'\subseteq S}\min \{A[i-1,S\setminus S'],f_i(S')\}.
    $$
 
    Overall, the MMS value of machine $m_i$ is the value in cell $A[m,\jobs]$. If this value is $-\infty$ then there exists no feasible schedule of the $n$ jobs on $m$ machines, which we exclude in our problem setting.

Similar to before, calculating the value for one cell takes time $2^n \cdot n$ as we must try all possible subsets $S' \subset S$, calculate $f(S')$, and look up the value for $A[k-1,S\setminus S']$. Given that $T$ has a size of $2^n \times m$, we get an overall running time of $2^{O(n)}\cdot m$, which can be bounded by $2^{O(n)}\cdot \text{poly}(|I|)$ as $|I|= \Omega(m)$.
\end{proof}

Applying Lemma~\ref{lem:dpnmms} for each $i \in [m]$, we can calculate the MMS value for each machine $m_i$. 
We use the values $\MMSi$ as the input values to the algorithm given in \cref{theo:FPTdpn}. This immediately determines if there exists a schedule that assigns each machine a bundle with a valuation of at least its MMS value. If this is impossible, we have to approximate these values. We can compute either an additive $\delta$ or multiplicative $\alpha$ approximation using a framework in which to embed the algorithm given above.

\begin{lemma}
We can solve
\begin{enumerate}
    \item the \addmms \probName{} in time $2^{O(n)}\cdot \text{poly}(|I|)$,
    \item the \multmms \probName{} in time $2^{O(n)}\cdot \text{poly}(|I|)$.
\end{enumerate}
    \label{lem:nAppdp}
\end{lemma}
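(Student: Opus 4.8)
The plan is to reduce both optimization problems to repeated calls of the feasibility decision procedure of \cref{theo:FPTdpn}, wrapped inside a search over the single approximation parameter. As a preprocessing step I would first invoke \cref{lem:dpnmms} once for every machine to obtain all maximin share values $\MMSi$; this costs $m \cdot 2^{O(n)}\cdot\text{poly}(|I|) = 2^{O(n)}\cdot\text{poly}(|I|)$. The crucial structural observation is that, for both objectives, feasibility is \emph{monotone} in the approximation parameter: if a prescribed quality level is attainable by some feasible allocation, then every weaker level is attainable by the same allocation. This monotonicity is exactly what licenses a binary search whose oracle is \cref{theo:FPTdpn}.

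For the \addmms \probName{} I would search over the additive slack $\delta$. Fixing a candidate $\delta \ge 0$, I set the target of each machine to $\fairValue_i = \MMSi - \delta$ and ask \cref{theo:FPTdpn} whether these targets are simultaneously achievable; this decides whether $\max_i \delta_i \le \delta$ is attainable. Since all valuations are integral, every bundle value $v_i(\jobs_i)$ and every $\MMSi$ lies in $[-n\,v_{\max}, n\,v_{\max}]$, so the optimal slack $\delta^\star$ is an integer in $\{0,1,\dots,2n\,v_{\max}\}$. A binary search over this range needs only $O(\log(n\,v_{\max}))$ iterations, which is polynomial in $|I|$ because $v_{\max}$ is encoded in binary; each iteration is a single $2^{O(n)}\cdot\text{poly}(|I|)$ call. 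Thus the additive case is immediate.

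The \multmms \probName{} is the more delicate part and the step I expect to be the main obstacle. Here I would translate a candidate multiplicative factor $\alpha$ into integer targets that respect the sign of $\MMSi$: for $\MMSi>0$ set $\fairValue_i = \lceil \alpha\cdot \MMSi \rceil$, for $\MMSi<0$ set $\fairValue_i = \lceil \MMSi/\alpha \rceil$, and for $\MMSi=0$ set $\fairValue_i = 0$ (so machine $i$ only needs a nonnegative bundle, independently of $\alpha$). Feasibility is again monotone in $\alpha$, but now the optimum $\alpha^\star$ is not an integer: in any allocation the attained factor of a binding machine equals $v_i(\jobs_i)/\MMSi$ or $\MMSi/v_i(\jobs_i)$, a rational whose numerator and denominator are bounded in absolute value by $V := n\,v_{\max}$. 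The obstacle is that a real-valued binary search can only shrink an interval, not name $\alpha^\star$ exactly, and we cannot afford to enumerate all $\Theta(V^2)$ candidate rationals since $V$ is exponential in the input size.

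I would resolve this with the standard bounded-denominator argument: two distinct rationals with denominators at most $V$ differ by at least $1/V^2$, so after $O(\log(V^3)) = O(\log(n\,v_{\max}))$ bisection steps over the bounded range $[-V,V]$ the surviving interval has width below $1/(2V^2)$ and therefore contains a \emph{unique} fraction with denominator at most $V$, namely $\alpha^\star$; this fraction is recovered in polynomial time by the classical best-rational-approximation (Stern--Brocot/continued-fraction) procedure. A final feasibility call at $\alpha^\star$ confirms attainability. In both objectives the actual allocation is produced by augmenting the dynamic program of \cref{theo:FPTdpn} with back-pointers and tracing the witnessing choices $S'$ from cell $A[m,\jobs]$ downward, which adds only polynomial overhead. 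Since every ingredient---computing the $\MMSi$, each feasibility oracle call, and the logarithmically many search steps---runs in $2^{O(n)}\cdot\text{poly}(|I|)$, both claims follow.
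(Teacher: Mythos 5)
Your proposal is correct, and the overall architecture is the same as the paper's: compute all $\MMSi$ via \cref{lem:dpnmms}, then wrap the feasibility oracle of \cref{theo:FPTdpn} in a logarithmic search over the single approximation parameter. For the \addmms{} case your argument is essentially identical to the paper's (integer $\delta$ in a range of size $O(n\,v_{\max})$, monotone feasibility, binary search). For the \multmms{} case you diverge in how the exact optimum $\alpha^\star$ is pinned down. The paper observes that every attainable factor is of the form $v_i(\jobs_i)/\MMSi$ with $v_i(\jobs_i)$ an integer in $[-n v_{\max}, n v_{\max}]$, so the candidate set has $O(m\,n\,v_{\max})$ elements, and it binary-searches over that discrete set in $O(\log(m n v_{\max}))$ steps; it leaves implicit how one navigates this exponentially large (though structured) set without enumerating it. You instead run a real-valued bisection down to interval width below $1/(2V^2)$ with $V = n v_{\max}$ and then recover $\alpha^\star$ as the unique rational with denominator at most $V$ in that interval via continued fractions. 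Your route is slightly more machinery but closes the gap the paper glosses over, and it also handles the $\MMSi=0$ and $\MMSi<0$ sign cases explicitly, matching what the paper does only later in \cref{lem:mnvApp}. Both give the claimed $2^{O(n)}\cdot\text{poly}(|I|)$ bound, so there is no gap in your argument.
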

\begin{proof}
    The optimal additive approximation is calculated via a binary search. We test for the smallest value $\delta$ such that there exists a schedule where each machine $m_i$ has at least valuation $\MMSi-\delta$ using \cref{theo:FPTdpn}. The bounds in which we need to search for $\delta$ are $[0,nv_{\max}].$ A binary search runs logarithmically in these values. Overall, this yields a running time of $2^{O(n)}\cdot \text{poly}(|I|)\cdot \log(nv_{\max})=2^{O(n)}\cdot \text{poly}(|I|).$

    For a multiplicative approximation, first observe that every machine can only receive an integer valuation due to all valuations being integer. As such, there is a discrete set of possible approximation factors $\alpha$ for every machine. Each of these $\alpha$ is given by an integer value lower than the MMS value of this machine, with the minimum being at $-nv_{\max}.$ Thus, there are $O(nv_{\max})$ possible factors $\alpha$ for each machine. Combining the possible approximation factors of all machines leads to a set of $O(mnv_{\max})$ many approximation factors. A binary search over this set then takes $O(\log(mnv_{\max}))$ many steps to arrive at the optimal approximation factor. We again set the values $\fairValue_i$ according to the binary search as $\alpha\cdot \MMSi$ and check for feasibility via \cref{theo:FPTdpn}. The running time is bounded by $2^{O(n)}\cdot \text{poly}(|I|)\cdot \log(mnv_{\max})= 2^{O(n)}\cdot \text{poly}(|I|)$.
\end{proof}
Finally, we tackle the most complex objective, that of the \wfmms \probName. The dynamic programming formulation given above cannot tackle this objective, as the guesses for the desired MMS values depend on each other, making a feasibility check within a binary search impossible. We can still solve this objective in a different manner, by constructing a graph representing the allocation of subsets of jobs and their associated valuations, and solving a minimum weight perfect matching problem. 
\begin{lemma}
    We can solve the \wfmms \probName{} in time $O(n^nm^3)$.
    \label{lem:wfn}
\end{lemma}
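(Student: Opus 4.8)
The plan is to decouple a feasible allocation into two independent choices: first, how the jobs are grouped into \emph{bundles}, and second, which machine receives which bundle. The key observation is that the welfare objective $\sum_{i\in[m]}\delta_i$ is separable across machines, with each term $\delta_i=\max\{\MMSi-v_i(\jobs_i),0\}$ depending only on the bundle $\jobs_i$ handed to machine $m_i$. Consequently, once the grouping of jobs into bundles is fixed, choosing the assignment of bundles to machines that minimizes the total penalty is exactly a minimum weight perfect matching problem. This suggests enumerating groupings and solving a matching for each.

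First I would precompute the value $\MMSi$ for every machine using \cref{lem:dpnmms}, at a cost of $m\cdot 2^{O(n)}\cdot\text{poly}(|I|)$. Next I would range over every unordered partition of the job set $\jobs$ into bundles. Since at most $n$ bundles can be nonempty, it suffices to enumerate all set partitions of $\jobs$; their number is the Bell number $B_n=O(n^n)$, and for an upper bound one may simply iterate over all $n^n$ functions $\jobs\to[n]$ and read off the induced partition. For each partition I would verify feasibility bundle by bundle: because feasibility under the EDF rule depends only on the processing times and deadlines of the jobs in a bundle, and \emph{not} on the machine, a bundle is feasible or infeasible independently of any assignment. Hence a single infeasible bundle lets us discard the entire partition in polynomial time, and otherwise every bundle of the partition may be placed on any machine.

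For each feasible partition I would pad it with empty bundles up to $m$ parts and build the bipartite graph $G=(\{A,B\},E)$ from the overview, where $A$ is the set of $m$ machines and $B$ the set of $m$ bundles. Rather than a bare existence-of-matching test, I would assign to edge $(a,b)$ the welfare penalty that machine $m_a$ incurs for bundle $b$, namely $\max\{\text{MMS}_a-w_{a,b},\,0\}$, where $w_{a,b}=\sum_{j_t\in b}v_a(j_t)$ and $w_{a,b}=0$ for an empty bundle. A minimum weight perfect matching in $G$ then returns the bundle-to-machine assignment for this partition that minimizes $\sum_{i\in[m]}\delta_i$; using the Hungarian algorithm this costs $O(m^3)$, and such a perfect matching always exists because empty bundles can absorb any surplus machines. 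Taking the minimum over all partitions yields the global optimum, since every feasible allocation decomposes uniquely into a set partition of $\jobs$ together with a bundle-to-machine assignment, both of which are covered by the enumeration and the inner matching.

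The running time is dominated by one $O(m^3)$ matching computation per partition, over $O(n^n)$ partitions, giving $O(n^n m^3)$; the MMS precomputation and the per-partition feasibility checks are of lower order and absorbed into this bound. I expect the main obstacle to be the correctness argument that this two-stage minimization coincides with the true optimum: specifically, justifying that optimizing over \emph{unordered} partitions (instead of ordered allocations) is sound precisely because feasibility is machine-independent and the objective is separable across machines, so that the inner minimum weight matching faithfully recovers the best machine assignment for each grouping.
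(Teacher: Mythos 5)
Your proposal is correct and follows essentially the same route as the paper's proof: enumerate the at most $n^n$ feasible partitions of the jobs into bundles, pad with empty/dummy bundles up to $m$ parts, and for each partition solve a minimum weight perfect matching with edge weights $\max\{\MMSi - v_i(\jobs_\ell),0\}$ via the Hungarian algorithm in $O(m^3)$ time. Your explicit justification of the two-stage decomposition (machine-independent feasibility plus separable objective) and the upfront MMS precomputation are slightly more detailed than the paper's write-up but change nothing substantive.
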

\begin{proof}
    We aim to minimize $\sum_{i}^{m} \text{MMS}_i-\fairValue_i= \sum_{i}^{m} \delta_i$, i.e., the sum of differences of the realized valuation $\fairValue_i$ for a machine $m_i$ and its MMS value $\MMSi$.
    
    We give an algorithm that considers every possible partition of the jobs into bundles, and their possible assignments to the machines. Exhaustively checking all these possibilities gives the desired running time, that is $O(n^n)$, if $m<n$. However, this approach does not guarantee FPT running time if $m > n$. To handle this case, we transform the problem to that of finding a matching in a bipartite graph that represents assigning bundles to machines.

    We start by constructing every feasible partition, i.e.\ every partition of the $n$ jobs into at most $\min\{n,m\}$ bundles such that each bundle satisfies the deadlines. Note that $n$ jobs cannot be partitioned into more than $n$ non-empty bundles, nor do we need to consider more than $m$ bundles if $m<n$. So there are at most $n^n$ of those partitions. For each of these partitions, we construct a bipartite graph. Let $\sigma$ be a partition with $k$ bundles. For $\sigma$, we construct a graph $G=(\{A,B\},E)$ as follows: the set of vertices $A$ contains $m$ vertices that represent the machines. For convenience, for a machine $m_i$, we also name the vertex $m_i$. The second set of vertices $B$ contains $k$ vertices that each represent a bundle in $\sigma$ and $m-k$ dummy vertices. Next, we construct the edges between $A$ and $B$.
    Edges incident to dummy vertices have weight zero. Edges incident to a bundle vertex set their weights according to the valuations of the bundles to the machines. In particular, the weight of the edge $\{m_i,\jobs_\ell\}$ between machine vertex $m_i$ and bundle vertex $\jobs_\ell$ is $\max\{\MMSi-v_i(\jobs_\ell), 0\}$. In this weighted graph, a solution to the \wfmms \probName{} corresponds to finding a matching of cardinality $m$ that has minimum weight. This is tractable with algorithms for the balanced assignment problem. We can solve this problem applying the Hungarian algorithm using Fibonacci heaps~\cite{FredmanT87}. The running time of this algorithm is $O(|E||V|+|V|^2\log(|V|))=O(m^2m+m^2\log m) = O(m^3).$ Doing this for every configuration yields a running time of $O(n^nm^3).$  
\end{proof}

\begin{theorem}
    For the \probName{} parameterized by $\{n\}$ we can in FPT time
    \begin{enumerate}
            \item[(i)] compute the maximin share value of every machine,
        \item[(ii)] determine whether or not a feasible allocation exists given targeted valuations for each machine,
        \item[(iii)] solve the \multmms, \addmms{} and \wfmms \probName.
    \end{enumerate}
    \label{theo:nobj}
\end{theorem}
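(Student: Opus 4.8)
The plan is to recognize that Theorem~\ref{theo:nobj} is a packaging of the results already established in \cref{theo:FPTdpn} and \cref{lem:dpnmms,lem:nAppdp,lem:wfn}; the only remaining work is to assemble the pieces and confirm that the combined running times stay FPT in $n$.

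For part~(i), I would apply \cref{lem:dpnmms} once for each machine $m_i$, $i \in [m]$, to obtain its maximin share value $\MMSi$. A single invocation costs $2^{O(n)}\cdot\text{poly}(|I|)$, and since there are $m$ machines with $|I|=\Omega(m)$, the total is $m\cdot 2^{O(n)}\cdot\text{poly}(|I|)=2^{O(n)}\cdot\text{poly}(|I|)$, still FPT in $n$. Part~(ii) is then precisely the statement of \cref{theo:FPTdpn}: given targeted valuations $\fairValue_i$ for every machine, the dynamic program decides in time $2^{O(n)}\cdot\text{poly}(|I|)$ whether a feasible allocation meeting all targets exists.

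For part~(iii) I would treat the three objectives separately. For \multmms MMS and \addmms MMS, I first run part~(i) to compute all the $\MMSi$, and then invoke \cref{lem:nAppdp}, which embeds the decision procedure of \cref{theo:FPTdpn} into a binary search over the additive slack $\delta$ or the multiplicative factor $\alpha$; both run in $2^{O(n)}\cdot\text{poly}(|I|)$. For \wfmms MMS the binary-search-over-targets strategy fails, because the per-machine deficits $\delta_i$ couple through the sum objective and so cannot be guessed independently within a feasibility check; I therefore instead invoke \cref{lem:wfn}, whose matching-based algorithm solves the problem in time $O(n^n m^3)=n^{O(n)}\cdot\text{poly}(|I|)$, which is again FPT in $n$.

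The main obstacle is not any new technical difficulty, since the hard combinatorial work is already carried out in the cited results, but rather keeping the running-time bookkeeping clean. In particular, I would emphasize that the $m$-fold repetition in part~(i) and the $O(m^3)$ matching overhead in \cref{lem:wfn} are absorbed into $\text{poly}(|I|)$ via $|I|=\Omega(m)$, so that each decision/approximation claim genuinely lies in $2^{O(n)}\cdot\text{poly}(|I|)$, while the welfare objective alone incurs the weaker (but still FPT) bound $n^{O(n)}\cdot\text{poly}(|I|)$.
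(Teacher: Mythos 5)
Your proposal is correct and matches the paper's own treatment: Theorem~\ref{theo:nobj} is indeed just the assembly of \cref{theo:FPTdpn}, \cref{lem:dpnmms}, \cref{lem:nAppdp} and \cref{lem:wfn}, with part~(i) obtained by $m$ invocations of \cref{lem:dpnmms}, part~(ii) being \cref{theo:FPTdpn} directly, and part~(iii) split between the binary-search framework and the matching-based algorithm exactly as you describe. Your running-time bookkeeping, including absorbing the factor $m$ via $|I|=\Omega(m)$, is the same as the paper's.
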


We complement these results by showing a lower bound on the computational complexity of the \probName{} proving our above algorithms are essentially tight.

The exponential time hypothesis, formulated by Impagliazzo and Paturi~\cite{ImpagliazzoP99}, is a computational hardness assumption. It states that the NP-complete problem 3-SAT cannot be solved in time $2^{o(\ell)}\text{poly}(|I|)$, where $\ell$ is the number of variables in the 3-SAT formulation. Under this assumption, we can further determine similar running time lower bounds for problems that 3-SAT can be reduced to. A direct consequence of the ETH is that \partition{} cannot be solved in time $2^{o(k)}\text{poly}(|I|)$ for $k$ input numbers in the instance of \partition{}. We use this to show that the \probName{} cannot be solved in time of $2^{o(n)}\text{poly}(|I|)$, where $n$ is the number of jobs in the instance. 
\begin{theorem}
    Assuming ETH, the \probName{} cannot be solved in time $2^{o(n)}\text{poly}(|I|)$. \label{thm:ethLB}
\end{theorem}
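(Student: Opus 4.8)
The plan is to establish a lower bound under ETH by reducing from \partition{}, exploiting the stated consequence of ETH that \partition{} on $k$ input numbers cannot be solved in time $2^{o(k)}\text{poly}(|I|)$. The core idea is to encode a \partition{} instance into an instance of the \probName{} where the number of jobs $n$ is linear in $k$, so that a $2^{o(n)}\text{poly}(|I|)$ algorithm for the \probName{} would yield a $2^{o(k)}\text{poly}(|I|)$ algorithm for \partition{}, contradicting ETH.

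First I would recall the \partition{} instance: given positive integers $a_1,\dots,a_k$ with total sum $2B$, decide whether there is a subset summing to exactly $B$. I would construct a \probName{} instance with exactly $n=k$ jobs, one per number, on $m=2$ machines. To neutralize the scheduling constraints and isolate the valuation structure, I would set all processing times and deadlines to a value that makes every allocation feasible (e.g.\ following the reduction to basic fair division described in the excerpt, setting $p_t=d_t=0$ so that $\mathcal{F}$ contains every partition). I would then set the valuation of each job equal to $a_t$ for both machines, $v_1(j_t)=v_2(j_t)=a_t$. Under this encoding, each machine's maximin share value is exactly $B$ precisely when a balanced subset exists, and more generally the targeted-value decision question (Theorem~\ref{theo:FPTdpn}) with $\fairValue_1=\fairValue_2=B$ is satisfiable if and only if the \partition{} instance is a yes-instance.

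The key accounting step is that the reduction is linear in the parameter: since $n=k$, any algorithm deciding the \probName{} in time $2^{o(n)}\text{poly}(|I|)$ would, composed with the polynomial-time reduction, decide \partition{} in time $2^{o(k)}\text{poly}(|I|)$. This directly contradicts the ETH-based lower bound for \partition{}, yielding the theorem. I would verify that the input size $|I|$ of the constructed \probName{} instance is polynomial in that of the \partition{} instance, which is immediate since the numbers $a_t$, two machines, and trivial deadlines all fit in polynomial encoding length.

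The main obstacle I anticipate is ensuring that the reduction genuinely produces a job count $n$ that is linear (not merely polynomial) in $k$, since a superlinear blow-up in $n$ would weaken the conclusion from $2^{o(n)}$ to something useless. The one-job-per-number encoding achieves $n=k$ exactly, which is why the two-machine \partition{} formulation is preferable to a three-machine \tpar{} formulation that might require padding jobs. A secondary subtlety is confirming that the MMS target values themselves can be encoded without introducing extra jobs or deadlines; here the fact that $\MMSi$ is computable and equals $B$ in the balanced case means we can feed the targeted value $\fairValue_i=B$ directly into the decision procedure of Theorem~\ref{theo:FPTdpn}, so the hardness transfers to the basic feasibility question and hence to all of \multmms MMS, \addmms MMS, and \wfmms MMS.
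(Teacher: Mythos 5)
Your proposal is correct and follows essentially the same route as the paper: a polynomial-time, job-count-preserving reduction from \partition{} to a two-machine instance with $n=k$, composed with the ETH-based lower bound for \partition{}. The only difference is in the gadget: the paper encodes the numbers as processing times with common deadline $t=\sum_i s_i/2$ and all-zero valuations (so schedule feasibility alone carries the hardness), whereas you encode them as valuations with zero processing times and deadlines (so the fairness target $\fairValue_1=\fairValue_2=B$ carries the hardness); both variants are valid and give the same $2^{o(n)}$ conclusion.
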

\begin{proof}
    We show a reduction from \partition{} to an instance of the \probName. Let the \partition{} instance consist of $n$ numbers $s_1, \dots, s_n$. Let $t = \sum_{i\in [n]} s_i/2$ be the desired value for each partition.

    Let $m=2$. For every $s_i$, $i \in [n]$, we create a job $j_i$ with processing time $p_i=s_i$, deadline $d_i=t$ and valuation $v_1(j_i) = v_2(j_i) = 0$, i.e., we only care about a feasible schedule. 

    A solution $S^*=(S_1,S_2)$ of \partition{} directly implies the desired schedule for the \probName{}, as $\sum_{i \in S_1}s_i= t =\sum_{k \in S_2}s_k$, all jobs in both sets finish before their deadline. Hence, assigning the corresponding jobs of $S_1$ to $m_1$ and respectively, $S_2$ to $m_2$ yields a schedule where no job is late.

    For the other direction, a schedule where no job finishes late implies that there exists a partition of jobs into sets $J_1,J_2$ assigned to $m_1,m_2$, respectively, such that $\sum_{j\in J_1}p_j \leq t$ and $\sum_{k\in J_2}p_k \leq t$. This directly implies that $\sum_{j\in J_1}p_j= \sum_{j\in J_1}s_j= t = \sum_{k\in J_2}s_k = \sum_{k\in J_2}p_k$, so we have a feasible solution of \partition. 

    Assume now that there exists an algorithm $A$ that solves the \probName{} in time $2^{o(n)}\text{poly}(|I|).$ Then, we can transform any instance of \partition{} into an instance of the \probName, using the transformation above. This transformation is computable in polynomial time and creates one job per number $s_i$ in the input of \partition. Thus, solving the \probName{} using algorithm $A$ yields an algorithm that solves partition in time $2^{o(n)}\text{poly}(|I|).$ This contradicts the exponential time hypothesis.
\end{proof}

\section{FPT algorithm for group instances with parameters \texorpdfstring{$m, g_{\max},v_{\max}$}{m,gmax, vmax}}\label{sec:mvnmax}
In this section, we consider the case where the jobs can be partitioned into
$\lambda$ groups, which we call {\em layers}, with the largest group containing $g_{\max}$ jobs. (If there are no non-trivial groupings 
of the jobs then $\lambda=1$ and $g_{\max}=n$.)
We now parameterize by the number of machines $m$, the maximum number of jobs in a layer $g_{\max}$, and the largest absolute valuation $v_{\max}$. 
For motivation, think of these layers as separate days in a work schedule. While we could plan the assignment of jobs to machines for each single day separately, we desire that summed over all days, each machine receives a fair distribution of the jobs. As such, the core challenge for this setting derives from finding a combination of these independent daily schedules that are collectively fair for every machine. We show each variant is solvable in FPT time by constructing an $n$-fold integer program with appropriate dimensions.

First, we calculate all feasible schedules, i.e.\ assignments of jobs to machines, in any layer by enumeration. We denote by $\mathcal{S}_k$ the set of all schedules for layer $k$. Let $x_j^k \in \{0,1\}$ be an indicator variable equal to $1$ if and only if the schedule $j$ is chosen for layer $k$. Furthermore, let $v_i^{j,k}$ be the valuation machine $m_i$ has for schedule~$j$ in layer $k$.
Locally, the core idea in our $n$-fold IP is to check that there is a schedule for every layer. Globally, we ensure that the chosen set of bundles gives at least the desired targeted MMS value for every machine. Specifically,
we have the program
\begin{align}
    &\sum_k^\lambda\sum_j^{|\mathcal{S}_k|}x_j^k\cdot v_i^{j,k}\geq \fairValue_i &\forall i\in [m]\\
    &\sum_j^{|\mathcal{S}_k|}x_j^k=1 &\forall k \in [\lambda]
\end{align}
The constraints of Type~(2) ensure that there is exactly one schedule selected for every layer. As we only consider feasible schedules, this ensures that all jobs are feasibly packed. The constraints of Type~(1) ensure that every machine has at least the targeted MMS value $\fairValue_i$ over all bundles (defined by the selected schedule). The constraints of Type $(1)$ are the global constraints, while Type $(2)$ are local. 

\begin{lemma}\label{lem:nmaxm}
    If there is a feasible schedule $\sigma$ for the \probName{} with maximin share value $\fairValue_i$ for each machine $m_i$, then there is a solution for the above $n$-fold. Further, every solution to the $n$-fold corresponds to a feasible schedule over all layers with a maximin share value of at least $\fairValue_i$ for each machine~$m_i$.
\end{lemma}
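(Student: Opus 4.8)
The plan is to prove the two implications separately, establishing a tight correspondence between global feasible schedules and integral solutions of the $n$-fold program. The conceptual backbone of both directions is the \emph{independence of the layers}: since the jobs in distinct layers occupy disjoint scheduling windows (e.g.\ different days), a global assignment of jobs to machines is feasible if and only if its restriction to each individual layer is feasible. This is precisely why $\mathcal{S}_k$, the set of feasible layer-$k$ schedules, can be enumerated independently and why the per-layer constraints of Type~(2) suffice to guarantee global feasibility. Combined with the additivity of the valuation functions, this independence is what makes the block structure of the $n$-fold faithful to the original problem.

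For the forward direction, I would start from a feasible schedule $\sigma$ attaining value at least $\fairValue_i$ for every machine $m_i$. Restricting $\sigma$ to the jobs of layer $k$ yields an assignment $\sigma_k$ that must be feasible within that layer, and hence $\sigma_k \in \mathcal{S}_k$ for each $k$. Setting $x_j^k = 1$ exactly for the index $j$ corresponding to $\sigma_k$ and $0$ otherwise immediately satisfies every Type~(2) constraint. To verify the Type~(1) constraints, I would invoke additivity: the total valuation machine $m_i$ receives under $\sigma$ decomposes as $\sum_k v_i^{j(k),k}$, where $j(k)$ is the selected schedule in layer $k$; since $\sigma$ attains at least $\fairValue_i$, this sum is at least $\fairValue_i$, which is exactly the left-hand side of constraint~(1).

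For the reverse direction, I would take any integral solution of the $n$-fold. The Type~(2) constraints, together with $x_j^k \in \{0,1\}$, force exactly one schedule to be selected in each layer. Taking the union of these selected per-layer schedules produces a single global assignment of every job to a machine; since the layers partition $\jobs$ and each selected schedule is feasible within its layer, the independence property guarantees the resulting global schedule is feasible. Finally, the Type~(1) constraints assert that the summed valuation $\sum_k v_i^{j(k),k}$ is at least $\fairValue_i$ for every machine, which by additivity equals the value that machine $m_i$ receives under the constructed schedule.

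The only genuinely delicate point---and the step I would single out as the main obstacle---is justifying the layer-decomposition of feasibility, i.e.\ that gluing together per-layer-feasible assignments never violates a deadline and that no feasible global schedule is lost by restricting attention to per-layer schedules. This rests entirely on the modeling assumption that layers correspond to disjoint time intervals, so that the {\tt EDF} feasibility test applies layer-by-layer without interaction. Once this is stated cleanly, both directions reduce to bookkeeping: matching the chosen indicator variables to schedule restrictions and applying additivity. I would therefore front-load the argument with a short remark formalizing layer independence, after which the claimed equivalence follows directly.
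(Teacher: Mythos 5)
Your proposal is correct and follows essentially the same route as the paper's proof: both directions are handled by the direct correspondence between a global schedule restricted to each layer and the indicator variables $x_j^k$, with Type~(2) constraints encoding the one-schedule-per-layer selection and Type~(1) constraints following from additivity of the valuations. The only difference is that you make explicit the layer-independence assumption (that per-layer feasibility glues to global feasibility), which the paper treats as part of the model definition of groups rather than as a step requiring justification.
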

\begin{proof}
    Let $\sigma$ be the feasible schedule for an instance $I^*$. Denote by $\sigma_k$ the schedule for layer~$k$. Now, for each layer $k$ proceed as follows: the schedule $\sigma_k$ must exist in the list of all feasible schedules $\mathcal{S}_k$. Denote it by $j$, and set the corresponding decision variable $x_j^k$ to one. Set $x_j^k = 0$ for $i \neq j$. This gives the solution to the above $n$-fold. 
    Clearly, all constraints of Type $(2)$ are satisfied, as we selected exactly one schedule $j$ per layer $k$. Furthermore, constraints of Type $(1)$ are also satisfied, as we copied the exact assignment of $\sigma$, which achieves a maximin share value of at least $\fairValue_i$ for every machine.

    To see the other direction, we use the one schedule $j$ with $x_j^k=1$ for each layer as the chosen schedule. This directly implies a feasible schedule with value at least $\fairValue_i$ for each machine $m_i$.    
\end{proof}

\paragraph*{Solving the $n$-fold}
We solve the $n$-fold using the algorithm by Cslovjecsek et al.~\cite{CslovjecsekEHRW21}. To calculate the running time, we need to bound the parameters $r,s,t$ and $\Delta$, representing the size of the blocks (matrices) and the largest entry in the $n$-fold respectively. The largest possible valuation, and therefore the largest entry $\Delta$ is $g_{\max}\cdot v_{\max}$. The number of global constraints $r$ is $m$. We have one local constraint for each block, so $s=1$. Finally, the number of constraints in each block is the amount of schedules for a layer. There are $g_{\max}^m$ possible partitions. Assigning all bundles of a partition to $m$ machines yields $m!\leq m^m$ possible assignments. As such, the total number of schedules for any layer can be bounded by $g_{\max}^m \cdot m^m$, which corresponds to number of variables, $t$, in a block. Putting this together yields the running time 

$$(\lambda g_{\max}^m m^m)^{(1+o(1))}2^{O(m)}(mg_{\max}v_{\max})^{O(m^2+1)},$$

which is FPT time for the given parameters $\{m, v_{\max}, g_{\max}\}$.

\begin{theorem}
    When parameterized by $\{m, g_{\max}, v_{\max}\}$, we can solve the \probName{} for a targeted MMS value $\fairValue_i$ for every machine in FPT time\\  $(\lambda g_{\max}^m m^m)^{(1+o(1))}2^{O(m)}(mg_{\max}v_{\max})^{O(m^2+1)}.$
    \label{theo:FPTLayers}
\end{theorem}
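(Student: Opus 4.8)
The plan is to combine the correctness of the $n$-fold formulation, established in Lemma~\ref{lem:nmaxm}, with the running-time guarantee for $n$-fold integer programs due to Cslovjecsek et al.~\cite{CslovjecsekEHRW21}. Lemma~\ref{lem:nmaxm} already shows that the program is feasible if and only if there is a feasible schedule over all layers attaining value at least $\fairValue_i$ for every machine $m_i$; hence solving the program decides the \probName{} for the targeted values, and reading off the unique selected schedule $x_j^k=1$ per layer recovers the allocation itself. So all that remains is to verify that the program has the required block structure and to bound the parameters $r,s,t,\Delta$ (the dimensions of the $C_i,D_i$ blocks and the largest entry) purely in terms of $\{m, g_{\max}, v_{\max}\}$, after which the stated time bound follows by direct substitution.

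First I would fix the block decomposition: one block per layer, so $N=\lambda$. The local constraints of Type~(2) force exactly one schedule per layer; there is a single such equality per block, giving $s=1$. The global constraints of Type~(1), one per machine, couple the blocks by demanding that each machine reach $\fairValue_i$, giving $r=m$. The number of variables $t$ per block equals the number $|\mathcal{S}_k|$ of feasible schedules enumerated for the layer. The key point is that a schedule for a layer is an assignment of its at most $g_{\max}$ jobs to the $m$ machines, restricted to those respecting the deadlines; enumerating all such assignments and discarding the EDF-infeasible ones yields at most $g_{\max}^m\cdot m^m$ schedules, a bound depending only on $m$ and $g_{\max}$. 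Finally, every entry of the program is either a bundle valuation $v_i^{j,k}$ or a $0/1$ selection coefficient, so $\Delta \le g_{\max}\cdot v_{\max}$.

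With these bounds in hand, I would substitute into the running time $(Nt)^{(1+o(1))}\cdot 2^{O(rs^2)}\cdot(rs\Delta)^{O(r^2s+s^2)}$ of~\cite{CslovjecsekEHRW21}: the factor $(Nt)^{(1+o(1))}$ becomes $(\lambda g_{\max}^m m^m)^{(1+o(1))}$; the factor $2^{O(rs^2)}$ becomes $2^{O(m)}$ since $s=1$; and $(rs\Delta)^{O(r^2s+s^2)}$ becomes $(mg_{\max}v_{\max})^{O(m^2+1)}$. This is exactly the claimed bound, and it is FPT in $\{m, g_{\max}, v_{\max}\}$ because $\lambda$ and the input encoding contribute only polynomially.

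The main obstacle I expect is the bookkeeping of the enumeration step and the parameter $t$: one must argue both that the brute-force generation of all deadline-feasible layer-schedules is correct and complete, and that their count stays bounded by a function of $m$ and $g_{\max}$ alone, so that the block width $t$ does not secretly reintroduce a dependence on $n$ or $\lambda$. Once the block dimensions are pinned down in this way, everything else is a routine substitution into the known $n$-fold solver, and the correctness of the decision follows immediately from Lemma~\ref{lem:nmaxm}.
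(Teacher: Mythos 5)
Your proposal is correct and follows essentially the same route as the paper: invoke Lemma~\ref{lem:nmaxm} for correctness of the $n$-fold formulation, bound $r=m$, $s=1$, $t\le g_{\max}^m m^m$, and $\Delta\le g_{\max}v_{\max}$, then substitute into the running time of Cslovjecsek et al.~\cite{CslovjecsekEHRW21}. The only cosmetic difference is in how the count of feasible layer-schedules is phrased, but you arrive at the same bound and the same final running time.
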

\begin{proof}
    We set up and solve the above $n$-fold and construct the final feasible schedule with the targeted MMS values as described in Lemma~\ref{lem:nmaxm}. Overall, the running time is $$(\lambda g_{\max}^m m^m)^{(1+o(1))}2^{O(m)}(mg_{\max}v_{\max})^{O(m^2+1)}.\qedhere$$    
\end{proof}

Then, using this information, we solve our three optimization problems. First, we must calculate the maximin share values $MMS_i$ of each machines $m_i$. We accomplish this using the same above $n$-fold.

\begin{lemma}
    We can compute $\MMSi$ for each machine in time $O(m \cdot \log(\lambda g_{\max}v_{\max})\cdot \rtNfold).$
    \label{lem:mnvMMS}
\end{lemma}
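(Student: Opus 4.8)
The plan is to reduce the computation of each maximin share value to the decision procedure of Theorem~\ref{theo:FPTLayers}, embedded in a binary search. By definition $\MMSi = \max_{(\jobs_1,\ldots,\jobs_m)\in \mathcal{F}} \min_{k\in[m]} v_i(\jobs_k)$, so $\MMSi$ is the largest threshold $\fairValue$ for which there exists a feasible allocation in which \emph{every} bundle is worth at least $\fairValue$ under the single valuation function $v_i$. This is precisely the question the $n$-fold of this section answers, provided we first replace every machine's valuation function by $v_i$ and set the common target $\fairValue_k=\fairValue$ for all $k\in[m]$.

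Concretely, to compute $\MMSi$ for a fixed machine $m_i$, I would recompute the coefficients $v_\ell^{j,k}$ in the program so that, for each machine index $\ell$ and each schedule $j$ of layer $k$, the coefficient equals the value of the bundle assigned to machine $\ell$ evaluated under the single function $v_i$. With this substitution and the uniform target $\fairValue_\ell=\fairValue$, the global constraints assert that every bundle of the allocation has value at least $\fairValue$ under $v_i$, so the program is feasible exactly when the worst bundle can be pushed above $\fairValue$. I then binary search for the optimal $\fairValue$. Since all valuations are integral and the total number of jobs is at most $\lambda g_{\max}$, every bundle value lies in the range $[-\lambda g_{\max} v_{\max},\, \lambda g_{\max} v_{\max}]$, whence $\MMSi$ is an integer in an interval of width $O(\lambda g_{\max} v_{\max})$. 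Each search step fixes a candidate $\fairValue$, builds the corresponding $n$-fold, and invokes the solver of Cslovjecsek et al.\ as a yes/no oracle, moving the lower or upper endpoint according to feasibility. Integrality guarantees convergence to the exact value $\MMSi$ after $O(\log(\lambda g_{\max} v_{\max}))$ oracle calls.

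Running this routine for each of the $m$ machines and charging $\rtNfold$ per oracle call yields the claimed bound $O(m\cdot \log(\lambda g_{\max} v_{\max})\cdot \rtNfold)$. The argument is largely routine; the only points needing care are verifying that the substitution $v_\ell\leftarrow v_i$ together with a single shared threshold genuinely encodes the $\min_k v_i(\jobs_k)$ quantity in the MMS definition, and bounding the search interval cleanly so that the logarithmic factor is over $\lambda g_{\max} v_{\max}$ rather than a larger quantity. Both follow directly from the structure of the $n$-fold in Lemma~\ref{lem:nmaxm} and from integrality of the valuations, so I anticipate no genuine obstacle beyond these bookkeeping checks.
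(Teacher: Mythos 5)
Your proposal matches the paper's proof: both replace every machine's valuation function with $v_i$, use the $n$-fold of Theorem~\ref{theo:FPTLayers} as a feasibility oracle for a common threshold, and binary search over an integral range of width $O(\lambda g_{\max} v_{\max})$, repeated for each of the $m$ machines. Your write-up is in fact slightly more careful than the paper's (which loosely says the ``lowest value'' of $\fairValue_i$ gives the MMS, where you correctly identify it as the largest feasible threshold), but the argument is the same.
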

\begin{proof}
To calculate the maximin share values $\MMSi$ of a single machine $m_i$, we first replace the valuation of every machine with that of $m_i$. In this setting, the lowest value for $\fairValue_i$ corresponds to the maximin share value of this machine. We find this value via a binary search between the largest possible value $m_i$ can obtain, which is $\lambda g_{\max}v_{\max}$, and 0. 

Repeating this binary search for every machine requires $m \cdot \log(\lambda g_{\max}v_{\max})$ iterations of the $n$-fold to calculate the maximin share values of all machines. This yields an overall running time of $m \cdot \log(\lambda g_{\max}v_{\max})\cdot \rtNfold$.
\end{proof}
Given the maximin share values for each machine $m_i$, we can now adapt our $n$-fold to solve our optimization problems.
\begin{lemma}
We can solve the \multmms \probName{} in time $O(m \cdot\log(\lambda g_{\max}v_{\max}))\cdot \rtNfold$.

 \label{lem:mnvApp}
\end{lemma}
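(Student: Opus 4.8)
The plan is to solve the \multmms \probName{} by embedding the $n$-fold integer program from Theorem~\ref{theo:FPTLayers} inside a binary search over the multiplicative approximation factor $\alpha$, mirroring the approach used for parameter $n$ in Lemma~\ref{lem:nAppdp}(2) but now using the layered $n$-fold as the feasibility oracle. The key observation is that once the maximin share values $\MMSi$ have been computed for every machine via Lemma~\ref{lem:mnvMMS}, the \multmms{} objective reduces to finding the largest $\alpha$ such that setting the target $\fairValue_i = \alpha \cdot \MMSi$ (with the appropriate sign convention for negative $\MMSi$) yields a feasible schedule. Feasibility of any fixed target vector $(\fairValue_1,\dots,\fairValue_m)$ is exactly what the $n$-fold of Theorem~\ref{theo:FPTLayers} decides.

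The concrete steps I would carry out are as follows. First, invoke Lemma~\ref{lem:mnvMMS} to compute $\MMSi$ for each machine $m_i$; this costs $O(m \cdot \log(\lambda g_{\max}v_{\max})\cdot \rtNfold)$ and is dominated by the final running time. Second, argue that the set of candidate optimal factors $\alpha$ is discrete and of polynomially bounded size: since all valuations are integers and each machine's realized value $v_i(\jobs_i)$ lies in the integer range $[-\lambda g_{\max}v_{\max},\, \lambda g_{\max}v_{\max}]$, the distinct thresholds at which the constraint $v_i(\jobs_i) \ge \alpha\cdot\MMSi$ can flip are $O(\lambda g_{\max}v_{\max})$ per machine, hence $O(m\lambda g_{\max}v_{\max})$ in total. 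Third, run a binary search over these $O(\log(m\lambda g_{\max}v_{\max}))$ candidate values; for each probed $\alpha$, set $\fairValue_i = \alpha\cdot\MMSi$ and call the $n$-fold solver of Theorem~\ref{theo:FPTLayers} to test whether a schedule meeting all targets simultaneously exists. The largest $\alpha$ passing the test, together with the schedule returned by the satisfied $n$-fold, is the optimal \multmms{} solution.

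The correctness argument splits into the standard two directions supplied by Lemma~\ref{lem:nmaxm}: any feasible schedule achieving factor $\alpha$ gives a solution to the $n$-fold with targets $\alpha\cdot\MMSi$, and conversely any $n$-fold solution yields a feasible schedule meeting those targets. Monotonicity of feasibility in $\alpha$ (a smaller $\alpha$ relaxes every constraint) justifies the binary search. The running time is the number of search steps times the cost of one $n$-fold solve, namely $O(m \cdot\log(\lambda g_{\max}v_{\max}))\cdot \rtNfold$, matching the claim, since $\log(m\lambda g_{\max}v_{\max})$ and $\log(\lambda g_{\max}v_{\max})$ agree up to constant factors absorbed into the $O$-notation and the MMS-computation step from Lemma~\ref{lem:mnvMMS} shares the same bound.

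I expect the main obstacle to be handling the sign convention in the definition of a multiplicative maximin shares allocation cleanly: when $\MMSi < 0$ the guarantee is $v_i(\jobs_i) \ge \tfrac{1}{\alpha}\MMSi$ rather than $\alpha\cdot\MMSi$, so the target $\fairValue_i$ must be set piecewise depending on the sign of $\MMSi$, and one must verify that feasibility remains monotone in $\alpha$ across both cases so that the binary search is still valid. A secondary point to treat carefully is confirming that the discretization genuinely captures the optimum — i.e.\ that the optimal $\alpha$ coincides with one of the threshold values where some machine's constraint becomes tight — rather than lying strictly between two consecutive candidates; this follows because $\alpha$ only matters through the integer-valued comparisons $v_i(\jobs_i) \ge \alpha\cdot\MMSi$, so only the breakpoints of these step functions can be optimal.
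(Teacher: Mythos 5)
Your proposal matches the paper's proof essentially step for step: compute the $\MMSi$ values via Lemma~\ref{lem:mnvMMS}, observe that integrality of valuations restricts the candidate approximation factors to a set of size $O(m\lambda g_{\max}v_{\max})$, binary search over that set, and for each probed $\alpha$ set the targets $\fairValue_i$ (piecewise by the sign of $\MMSi$, with rounding to keep the right-hand sides integral) and invoke the $n$-fold of Theorem~\ref{theo:FPTLayers} as the feasibility oracle. The concerns you flag (sign convention, discretization capturing the optimum) are exactly the points the paper addresses, so no further changes are needed.
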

\begin{proof}
     If there does not exist a feasible schedule that assigns every machine $m_i$ a valuation of at least $\MMSi$, we need to find the best feasible multiplicative approximation. As explained in \cref{lem:nAppdp}, there is only a discrete set of possible approximation factors $\alpha$ for every machine as the valuations are integral. The minimum of these values is at least $-(\lambda g_{\max} v_{\max}).$ Therefore, we have $O(\lambda g_{\max} v_{\max})$ many possible approximation factors for each machine. Then, combining all these possible approximation factors into a single set yields at most $O(m\lambda g_{\max} v_{\max})$ many elements, over which we conduct our binary search. This takes $O(\log(m\lambda g_{\max} v_{\max}))$ many steps.
     
     Whenever an approximation factor $\alpha$ is selected, all $\fairValue_i$ are scaled by that factor. Any values that are non-integer as a result get rounded up, i.e.\ the first set of constraints transform to $\sum_k^\lambda\sum_j^{|\mathcal{S}_k|}x_j^k\cdot v_i^{j,k}\geq \lceil\alpha\MMSi\rceil$ for every machine with positive maximin share values. If a machine has negative maximin share values, we need to ensure that $\sum_k^\lambda\sum_j^{|\mathcal{S}_k|}x_j^k\cdot v_i^{j,k}\geq \lceil\frac{1}{\alpha}\MMSi\rceil.$ 
     
     In total, we have to calculate our $n$-fold once for every value of the binary search. The binary search computes in time $O(\log(m\cdot g_{\max}v_{\max})),$ so the total running time can be bounded by $O(\log(m\cdot g_{\max}v_{\max}))\cdot \rtNfold.$ This is dominated by the running time of \cref{lem:mnvMMS}, yielding the desired running time. 

\end{proof}

Computing the best additive approximation, in contrast to the previous objective, can be expressed as a linear objective function and hence be embedded into $n$-fold formulation. We elaborate why this approach does not work for \multmms \probName{} after the next lemma. 

\begin{lemma}
    We can solve the \addmms\probName{} problem in time \rtNfold.
    \label{lem:addMMSadapt}
\end{lemma}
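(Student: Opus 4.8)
The plan is to modify the $n$-fold integer program from \cref{theo:FPTLayers} so that, instead of merely checking feasibility against fixed targets $\fairValue_i$, it directly minimizes the welfare objective $\sum_{i\in[m]}\delta_i$ as part of the IP's optimization. The key observation is that, unlike the multiplicative case, the additive deviations $\delta_i=\max\{\MMSi-v_i(\jobs_i),0\}$ aggregate \emph{linearly}, so the \addmms objective is a linear function of the allocation and can be folded into the objective vector of the $n$-fold.

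First I would, as in \cref{lem:mnvMMS}, precompute each maximin share value $\MMSi$ via the binary-search subroutine; these are now fixed constants feeding into the objective. Next I would augment the IP with one slack variable $\delta_i$ per machine, localized appropriately, together with the constraints
\begin{align}
    \sum_k^\lambda\sum_j^{|\mathcal{S}_k|}x_j^k\cdot v_i^{j,k} + \delta_i &\geq \MMSi &\forall i\in [m]\\
    \delta_i &\geq 0 &\forall i\in [m]
\end{align}
and replace the former feasibility test with the linear objective $\min\sum_{i\in[m]}\delta_i$. Because $\delta_i$ appears with a nonnegativity bound and the objective is minimized, at optimum $\delta_i=\max\{\MMSi-v_i(\jobs_i),0\}$, matching the definition in (3). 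I would verify that these extra variables and the $m$ added constraints enlarge the block dimensions $r,s,t$ only by an additive $O(m)$, and that $\Delta$ is unchanged up to the $g_{\max}v_{\max}$ bound, so the asymptotic running time remains $\rtNfold$. The correspondence between IP solutions and feasible schedules is inherited verbatim from \cref{lem:nmaxm}.

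The main obstacle is ensuring the slack variables $\delta_i$ respect the $n$-fold block structure: the global constraints already couple all blocks through the valuation sums, and each $\delta_i$ is a single global quantity rather than a per-layer variable, so it cannot sit in any local block of the original $\lambda$-block decomposition. I would resolve this by appending a single extra ``block'' (or equivalently by treating the $\delta_i$ as part of the global constraint matrix $C_i$ with an additional constant brick), so that the $m$ welfare variables live in one auxiliary block whose local constraints enforce $\delta_i\ge 0$ and whose contribution to the global rows is the identity coupling $+\delta_i$. Since this adds only $m$ variables, $m$ local constraints, and $O(1)$ to the relevant dimensions, the parameters $r=O(m)$, $s=O(1)$, $t=O(g_{\max}^m m^m + m)$, and $\Delta=O(g_{\max}v_{\max})$ are all still bounded by functions of $\{m,g_{\max},v_{\max}\}$, preserving FPT time. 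The remaining verification, that the linear objective exactly computes $\sum_i\delta_i$ and that this is precisely the \wfmms value, is then routine and explains why the same trick fails for \multmms: there the per-machine factor $\alpha$ is shared across all constraints and enters multiplicatively, so it cannot be expressed as a linear objective over the block variables.
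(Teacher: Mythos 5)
There is a genuine gap: you have solved the wrong optimization problem. The \addmms{} objective defined in the paper is the \emph{min--max} objective $\min_{(\jobs_1,\ldots,\jobs_m)\in\mathcal{F}}\max_{i\in[m]}\delta_i$, whereas your IP minimizes $\sum_{i\in[m]}\delta_i$, which is the \wfmms{} objective (handled separately in Lemma~\ref{lem:mnvDist}). These two objectives genuinely differ: with two machines, deficit vectors $(3,0)$ and $(2,2)$ are ranked oppositely by min-sum and min-max, so your program can return an allocation that is optimal for welfare but suboptimal for the additive approximation the lemma asks for. You even acknowledge at the end that your objective ``is precisely the \wfmms{} value,'' which confirms the mismatch. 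The paper's proof instead introduces a \emph{single} slack variable $z$ shared by all machines, imposes $\sum_k\sum_j x_j^k v_i^{j,k}\geq \MMSi - z$ for every $i$, and minimizes $z$; this directly encodes the min-max objective and adds only one column per block, so the running time $\rtNfold$ is immediate.

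The rest of your argument is salvageable almost verbatim once you make this substitution: precomputing the $\MMSi$ via Lemma~\ref{lem:mnvMMS}, the observation that the additive deviation enters the constraints linearly (unlike the multiplicative factor $\alpha$, which is why the same trick fails for \multmms{}), and the bookkeeping that the dimensions $r$, $s$, $t$, and $\Delta$ remain bounded by functions of $\{m,g_{\max},v_{\max}\}$ all carry over. Your concern about where the slack variables live in the block decomposition is legitimate and your proposed fix (an auxiliary block, or folding them into the global bricks) is a reasonable way to make it rigorous --- the paper itself is terse on this point, asserting only that the column count per block grows by one --- but with a single shared $z$ the issue is even milder than in your $m$-variable version.
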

\begin{proof}
    Approximating the minimal additive distance to the optimal MMS values can be expressed as a linear objective function of the $n$-fold. We add a variable $z$ that represents the maximum distance of all machines to their MMS value. The objective is then to minimize this $z.$ The remaining variables and values are defined as above. The modified $n$-fold is the following:
\begin{align}
& \min  z \\
    &\sum_k^\lambda\sum_j^{|\mathcal{S}_k|}x_j^k\cdot v_i^{j,k}\geq \MMSi - z &\forall i\in [m]\\
    &\sum_j^{|\mathcal{S}_k|}x_j^k=1 &\forall k \in [\lambda]
\end{align}

As the $z$ variable only occurs in the global variables, introducing it does not change the structure of the $n$-fold. The number of columns in each block is only increased by 1. Therefore, we can solve this $n$-fold in time $\rtNfold$.
\end{proof}
We remark that computing the optimal multiplicative approximation factor $\alpha$ in the manner above does not work as it cannot be expressed as a linear objective function to an $n$-fold over integral variables since $\alpha\in \mathbb{R}$.

Similarly, we tackle the most complex objective, that of the
optimal welfare maximin shares problem.

\begin{lemma}
    The \wfmms\probName{} is solvable in time $\rtNfold$.   
    \label{lem:mnvDist}
\end{lemma}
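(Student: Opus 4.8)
The plan is to mirror the additive construction of \cref{lem:addMMSadapt}, replacing the single slack variable $z$ by one slack variable per machine so that the welfare objective can be read off directly as a linear function of the block variables. First I would invoke \cref{lem:mnvMMS} to precompute $\MMSi$ for every machine $m_i$; these become fixed integer constants in the program. Then, for each machine $m_i$, I introduce a nonnegative integer variable $\delta_i$ and impose the global constraint
\begin{align}
    &\sum_k^\lambda\sum_j^{|\mathcal{S}_k|}x_j^k\cdot v_i^{j,k} + \delta_i \geq \MMSi &\forall i\in [m]
\end{align}
alongside the unchanged local schedule-selection constraints $\sum_j^{|\mathcal{S}_k|}x_j^k=1$ for all $k\in[\lambda]$ and the bounds $\delta_i\geq 0$. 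The objective is $\min \sum_{i\in[m]}\delta_i$, which is linear and hence admissible in the $n$-fold framework, exactly as for the additive case.

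The correctness step is to verify that this program computes the welfare penalty. Writing $v_i(\jobs_i)=\sum_k\sum_j x_j^k v_i^{j,k}$ for the value machine $m_i$ obtains from the selected schedules, constraint~(1) forces $\delta_i\geq \MMSi - v_i(\jobs_i)$, while the bound forces $\delta_i\geq 0$, so jointly $\delta_i\geq \max\{\MMSi-v_i(\jobs_i),0\}$. For any fixed choice of the $x_j^k$, minimization drives each $\delta_i$ down to the tight value $\delta_i=\max\{\MMSi-v_i(\jobs_i),0\}$, and since all valuations and $\MMSi$ are integral this optimum is attained at an integer point. Minimizing over the $x_j^k$ as well therefore minimizes $\sum_i \max\{\MMSi-v_i(\jobs_i),0\}=\sum_i\delta_i$ over all feasible schedules, which is precisely the \wfmms\ objective.

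It remains to check that the block structure, and hence the running time, survives. I would collect all $m$ slack variables into a single new block and leave the $\lambda$ layer blocks untouched. Each $\delta_i$ appears with coefficient $1$ in exactly the $i$-th global constraint, so the number of global rows stays $r=m$ and the largest entry $\Delta=g_{\max}v_{\max}$ is unaffected; the extra block contributes only $m$ columns (padded to $t=g_{\max}^m m^m$) and at most $m$ local bound constraints. Since $r$, $s$, $t$, and $\Delta$ remain bounded by functions of $\{m,g_{\max},v_{\max}\}$, the program is solved in time $\rtNfold$ by the algorithm of Cslovjecsek et al.~\cite{CslovjecsekEHRW21}. The main obstacle I anticipate is this structural bookkeeping: unlike the lone variable $z$ of \cref{lem:addMMSadapt}, introducing $m$ global slacks requires confirming that $r$ and $\Delta$ do not grow and that the added block inflates only the column count (by a parameter-bounded amount), so that the single-solve bound $\rtNfold$ genuinely carries over; the $\max\{\cdot,0\}$ linearization itself is routine.
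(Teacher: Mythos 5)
Your proposal is correct and follows essentially the same route as the paper: both introduce one slack variable per machine, relax the global value constraint to $\sum_k\sum_j x_j^k v_i^{j,k}\geq \MMSi - \delta_i$ with $\delta_i\geq 0$, and minimize $\sum_i\delta_i$ over the otherwise unchanged $n$-fold from \cref{theo:FPTLayers}. The only (immaterial) difference is bookkeeping — you place the $m$ slacks in a dedicated block, while the paper folds them into the existing blocks — and your explicit verification that the optimum forces $\delta_i=\max\{\MMSi-v_i(\jobs_i),0\}$ is slightly more careful than the paper's one-line justification.
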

\begin{proof}
Minimizing $\delta_i=\text{MMS}_i-\fairValue_i$ for a machine $m_i$, can again be expressed as a linear function. Let variables $z_i \in [0, \dots, \MMSi]$, $i~\in [m]$, denote the distance to the desired $\MMSi$ value. We aim to minimize the sum of all $z_i$. The remaining variables and values are defined as above. Our altered $n$-fold is now as follows:

\begin{align}
& \min \sum_{i=1}^m z_i \\
    &\sum_k^\lambda\sum_j^{|\mathcal{S}_k|}x_j^k\cdot v_i^{j,k}\geq \MMSi - z_i &\forall i\in [m]\\
    &\sum_j^{|\mathcal{S}_k|}x_j^k=1 &\forall k \in [\lambda]
\end{align}

Introducing the $z_i$ variables does not change the structure of the $n$-fold. The number of columns in each block is only increased by $1$. Hence, we can also solve this $n$-fold in time $\rtNfold$.

\end{proof}

We may collectively summarize the results of this section in the following theorem.
\begin{theorem}
    For the \probName{} parameterized by $\{m,g_{\max},v_{\max}\}$ we can in FPT time
    \begin{enumerate}
            \item[(i)] compute the maximin share value of every machine,
        \item[(ii)] determine whether or not a feasible allocation exists given targeted valuations for each machine,
        \item[(iii)] solve the \multmms, \addmms\  and wf-\probName.
    \end{enumerate}
    \label{theo:mnvobj}
\end{theorem}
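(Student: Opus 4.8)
The plan is to assemble this theorem directly from the lemmas established earlier in this section, since each of its three claims corresponds to one (or a small group) of them; no new machinery is required. First, for part~(i), I would invoke Lemma~\ref{lem:mnvMMS}: for each machine $m_i$ we replace every machine's valuation function by $v_i$ and binary search for the smallest achievable target $\fairValue_i$ inside the $n$-fold of Theorem~\ref{theo:FPTLayers}; the resulting value is exactly $\MMSi$. Performing this independently for all $m$ machines costs $O(m\cdot\log(\lambda g_{\max}v_{\max})\cdot\rtNfold)$, which is FPT in $\{m,g_{\max},v_{\max}\}$.

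For part~(ii), I would simply cite Theorem~\ref{theo:FPTLayers}, which already decides whether a feasible schedule meeting a vector of targeted valuations $(\fairValue_1,\dots,\fairValue_m)$ exists, by solving the $n$-fold a single time. Feeding in the values $\MMSi$ computed in part~(i) then immediately answers whether a genuine maximin-shares allocation (all $\alpha_i=1$) exists.

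For part~(iii), each of the three optimal objectives is handled by its own lemma. The \multmms\probName{} follows from Lemma~\ref{lem:mnvApp}, where I would binary search over the discrete set of at most $O(m\lambda g_{\max}v_{\max})$ possible approximation factors $\alpha$, rescaling and rounding the right-hand sides to $\lceil\alpha\MMSi\rceil$ (or $\lceil\tfrac{1}{\alpha}\MMSi\rceil$ for negative shares) at each step and re-solving the $n$-fold. The \addmms\probName{} and \wfmms\probName{} follow from Lemmas~\ref{lem:addMMSadapt} and~\ref{lem:mnvDist}: here the slack is captured by adding a single global variable $z$ (resp.\ per-machine variables $z_i$) and minimizing $z$ (resp.\ $\sum_i z_i$) as a linear objective, which leaves the block structure of the $n$-fold intact and so is solved in a single run in time $\rtNfold$.

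The step I expect to require the most care is not any single reduction but verifying that the composition stays within FPT time and that the objectives slot into the block structure correctly. In particular, the multiplicative objective cannot be expressed as a linear objective over the integral $n$-fold variables, since $\alpha\in\mathbb{R}$, so it genuinely needs the binary-search wrapper of Lemma~\ref{lem:mnvApp}, whereas the additive and welfare objectives can be internalized as linear objectives. Keeping these two treatments separate, and confirming that each added variable enlarges only the column count of the blocks and not the parameters $r$, $s$, or $\Delta$ that control the $n$-fold running time, is the delicate bookkeeping that makes the final bounds go through.
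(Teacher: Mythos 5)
Your proposal is correct and matches the paper exactly: the theorem is stated as a summary of the section, with part~(i) following from Lemma~\ref{lem:mnvMMS}, part~(ii) from Theorem~\ref{theo:FPTLayers}, and part~(iii) from Lemmas~\ref{lem:mnvApp}, \ref{lem:addMMSadapt} and~\ref{lem:mnvDist}, precisely as you assemble it. Your closing observation about why the multiplicative objective needs the binary-search wrapper while the additive and welfare objectives can be internalized as linear objectives is also the same distinction the paper draws.
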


\section{FPT algorithm with parameters \texorpdfstring{$m, \#d, v_{\max}, p_{\max}$}{m,\#d,vmax, pmax}}\label{sec:mdvp}
We now present our third FPT algorithm for the parameterization $\{m, \#d, v_{\max}, p_{\max}\}$.
The critical part is constructing an appropriate $n$-fold integer program. Let $x_i^t\in \{0,1\}$ be an indicator variable describing whether or not job $j_t$ is assigned to machine $m_i$. 
We denote the $k$-th deadline (in order) by $D_k$ where $k\in [\#d]$.
As usual, $\fairValue_i$ is the targeted MMS value we need to exceed for machine $m_i$. Again, for now we assume these values are given and present an algorithm that either computes a schedule achieving these targets, or certifies that such a schedule does not exist. We use the same 
routine later to actually compute the true MMS values for each machine, and then to optimize our targets.
The relevant $n$-fold IP is defined as follows.
\begin{align}
    \sum_{ j_t \in \jobs : d_t \leq D_k}x_i^t p_t &\leq D_k &\forall k \in [\#d], i \in [m]\\
    \sum_{j_t \in  \jobs}x_i^t v_i^t&\geq \fairValue_i &\forall i\in [m]\\
    \sum_{i\in [m]} x_i^t &= 1 &\forall t \in [n]
\end{align}
The constraints of Type $(1)$ ensure that every job assigned to a machine finishes before its deadline. We check this for every unique deadline $D_k$ on every machine $m_i$. This attests a feasible schedule. The constraint of Type $(2)$ ensure that the targeted MMS values for each machine are achieved. Here $(1)$ and $(2)$ are our global constraints.
The Type $(3)$ constraints ensure that every job is assigned to exactly one machine. These are our local constraints and apply for every job $j_t$.
\begin{lemma}
    If there is a feasible schedule $\sigma$ for the \probName{} with maximin share value $\fairValue_i$ for each machine $m_i$, then there is a solution for the $n$-fold. Further, every solution to the $n$-fold corresponds to a feasible schedule over all layers with a maximin share value of at least $\fairValue_i$ for each machine~$m_i$.
    \label{lem:mdvpnfold}
\end{lemma}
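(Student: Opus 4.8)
The plan is to prove both directions of the equivalence by reading the job-to-machine assignment directly off the indicator variables $x_i^t$, exactly as in the proof of Lemma~\ref{lem:nmaxm}. The crux of the argument is to verify that the three constraint types of the $n$-fold IP encode precisely the three properties that a target-achieving feasible schedule must satisfy: that it is a partition (Type (3)), that each machine meets its target valuation (Type (2)), and that each machine's bundle can be feasibly scheduled against the deadlines (Type (1)).

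For the forward direction, I would suppose $\sigma$ is a feasible schedule with value at least $\fairValue_i$ for every machine $m_i$, and set $x_i^t = 1$ if $\sigma$ assigns job $j_t$ to machine $m_i$, and $x_i^t = 0$ otherwise. Since $\sigma$ is an allocation, each job lands on exactly one machine, so the Type (3) constraints hold. Since $\sigma$ achieves value at least $\fairValue_i$ for machine $m_i$, we have $\sum_t x_i^t v_i^t = v_i(\jobs_i) \ge \fairValue_i$, giving the Type (2) constraints. The only nontrivial content is the Type (1) constraints, addressed below. For the backward direction, given any feasible $\{0,1\}$-solution I would define the allocation $(\jobs_1,\dots,\jobs_m)$ by placing $j_t$ on the unique machine $m_i$ with $x_i^t = 1$; the Type (3) constraints guarantee this machine exists and is unique, so this is a genuine allocation, and the Type (2) constraints then yield $v_i(\jobs_i) \ge \fairValue_i$ for every machine.

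The main obstacle, common to both directions, is showing that the Type (1) constraints are equivalent to single-machine feasibility under the EDF rule. The key fact I would invoke is the classical characterization: a set of jobs is schedulable on one machine so that every job meets its deadline if and only if, for every deadline value $D_k$, the total processing time of jobs with deadline at most $D_k$ is at most $D_k$. For necessity, in any ordering the jobs with deadline at most $D_k$ must all complete by time $D_k$, so the sum of their processing times cannot exceed $D_k$. For sufficiency, EDF schedules exactly these jobs first, so the last of them completes at time $\sum_{d_t \le D_k} p_t \le D_k$ and meets its deadline; applying this to every $k$ shows all deadlines are respected.

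Since the Type (1) constraint for machine $m_i$ and deadline $D_k$ reads $\sum_{j_t : d_t \le D_k} x_i^t p_t \le D_k$, it is precisely this condition restricted to the jobs assigned to $m_i$. Hence $\sigma$ feasible implies the Type (1) constraints hold (forward direction), and conversely the Type (1) constraints force each bundle $\jobs_i$ to be EDF-schedulable, so the allocation read off any IP solution is feasible (backward direction). This completes the correspondence and establishes the lemma. I expect the only delicate point to be stating the EDF characterization cleanly; everything else reduces to matching each constraint type against the definition of a target-achieving feasible allocation.
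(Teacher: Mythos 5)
Your proposal is correct and follows essentially the same route as the paper's proof: both directions are handled by reading the job-to-machine assignment directly off the indicator variables $x_i^t$ and matching each constraint type against the corresponding property of a target-achieving feasible schedule. The only difference is that you explicitly state and justify the EDF characterization (a bundle is single-machine feasible iff for every deadline $D_k$ the total processing time of its jobs with deadline at most $D_k$ is at most $D_k$), whereas the paper asserts the equivalence of the Type (1) constraints with schedule feasibility without spelling this out; your version is the more complete one.
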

\begin{proof}
    Given a schedule for an instance $I^*$, each job is assigned to exactly one machine. Set the $x_i^t$ variables according to this assignment. As no job is executed late in schedule~$\sigma$, this assignment is feasible. Furthermore, since this assignment produced a maximin share value of $\fairValue_i$ for every machine $m_i$, every machine has at least that total amount of value assigned to it. Thus, constraints of Type $(2)$ are satisfied.
    As a result, if the $n$-fold has no feasible solution, no schedule with a maximin share value of $\fairValue_i$ for every machine $m_i$ can exist for the given instance.
    
    Given a solution to the above $n$-fold integer program, we can construct a schedule easily: assign each job $j_t$ to a machine $m_i$ according to the corresponding $x_i^t$ indicator variable. This placement is feasible by construction of the $n$-fold, and yields a schedule of value at least $\fairValue_i$ for every machine~$m_i$. 
\end{proof}

\begin{theorem}
When parameterized by $m, \#d, v_{\max}, p_{\max}$, we can solve the \probName{} for a targeted MMS value $\phi_i$ for every machine $m_i$ in FPT time $\#d\cdot m^{3+o(1)} \cdot\max\{p_{\max}, v_{\max}\})^{O((\#d^2\cdot m^3)}n^{1+o(1)}$.
\label{theo:FPTmdvp}
\end{theorem}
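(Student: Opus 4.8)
The plan is to invoke the $n$-fold solver of Cslovjecsek et al.~\cite{CslovjecsekEHRW21} directly on the integer program of Lemma~\ref{lem:mdvpnfold}. Its correctness is already settled there: feasible IP solutions correspond exactly to feasible schedules meeting every target $\fairValue_i$, and infeasibility of the IP certifies that no such schedule exists. Hence the entire content of the theorem is a running-time calculation, and the task reduces to reading off the block-structure parameters $N, t, r, s, \Delta$ of this program and substituting them into the solver's bound $(Nt)^{1+o(1)}\cdot 2^{O(rs^2)}\cdot(rs\Delta)^{O(r^2 s + s^2)}$.

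First I would fix the block decomposition. There is one block per job, so $N = n$, and each block holds the $m$ indicator variables $x_1^t,\dots,x_m^t$, giving $t = m$ columns per block. The only block-internal (local) constraint is the assignment equality $\sum_{i\in[m]} x_i^t = 1$ of Type~(3); encoding this equality as the two inequalities $\sum_i x_i^t \le 1$ and $-\sum_i x_i^t \le -1$ gives $s = O(1)$. The constraints of Types~(1) and~(2) each sum over many jobs and hence couple variables across blocks, so they are precisely the global constraints. Counting them yields $\#d\cdot m$ inequalities of Type~(1) (one per deadline--machine pair) plus $m$ of Type~(2), so $r = m(\#d+1) = O(m\,\#d)$.

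The step I expect to require the most care, and which is the conceptual heart of the argument, is bounding $\Delta$, the largest absolute matrix entry. The coefficients appearing in the constraint matrices are exactly the processing times $p_t$ in Type~(1), the valuations $v_i^t$ in Type~(2), and $0/1$ entries in Type~(3). The potentially large deadline values $D_k$ appear only on the \emph{right-hand side} of Type~(1) and therefore do \emph{not} enter $\Delta$; this is what lets us get away with the weak parameter $\#d$ (which only bounds the \emph{number} of feasibility constraints) instead of $d_{\max}$. Consequently $\Delta = \max\{p_{\max}, v_{\max}\}$, and apart from $N=n$ every quantity $t, r, s, \Delta$ is a function of $\{m, \#d, p_{\max}, v_{\max}\}$ alone, which is exactly the FPT condition.

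Finally I would substitute these values into the solver's bound and simplify. The job count is confined to the single factor $(Nt)^{1+o(1)} = n^{1+o(1)}m^{1+o(1)}$; the remaining factors are $2^{O(rs^2)} = 2^{O(m\#d)}$ and $(rs\Delta)^{O(r^2 s + s^2)} = \bigl(m\,\#d\cdot\max\{p_{\max},v_{\max}\}\bigr)^{O(m^2\#d^2)}$. I would then coarsen the parameter-only prefactors into a single power of $\Delta=\max\{p_{\max},v_{\max}\}$ — absorbing terms such as $m^{O(m^2\#d^2)}$ via $\log_2 m \le m$ — to arrive at the stated form $\#d\cdot m^{3+o(1)}\cdot(\max\{p_{\max},v_{\max}\})^{O(\#d^2 m^3)}\cdot n^{1+o(1)}$, which is manifestly FPT in $\{m,\#d,p_{\max},v_{\max}\}$ with only a near-linear dependence on the input size through $n$.
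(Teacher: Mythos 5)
Your proposal is correct and follows essentially the same route as the paper: it sets up the identical block decomposition ($N=n$ blocks of $t=m$ assignment variables, $s=1$ local constraint per job, $r=\#d\cdot m+m$ global constraints, $\Delta=\max\{p_{\max},v_{\max}\}$) and plugs these into the Cslovjecsek et al.\ bound, relying on Lemma~\ref{lem:mdvpnfold} for correctness exactly as the paper does. Your explicit remark that the deadlines $D_k$ appear only on the right-hand side and hence do not enter $\Delta$ is a nice clarification that the paper leaves implicit, but it does not change the argument.
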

\begin{proof}
    We solve the above $n$-fold using the algorithm by Cslovjecsek et al.~\cite{CslovjecsekEHRW21}. To calculate the running time, we must bound the parameters $r,s,t$ and $\Delta$, representing the size of the blocks (matrices) and the largest entry in the $n$-fold, respectively. There is a constraint of Type~$(1)$ for every combination of deadlines and machines, a total of $\#d\cdot m$. Constraints of Type $(2)$ are introduced for every machine. Hence, $r=\#d\cdot m+m$. We have one local constraint for every job. As we have one block for every job, we have $s=1$ and $N = n$. We have one variable for every machine in each block, so $t=m$. Finally, the factors of the constraint matrix are $v_i(j_t)$ and $p_t$. Therefore, $\Delta=\max\{v_i(j_t),p_t \,|\, i\in [m], j\in [n]\}=\max\{p_{\max}, v_{\max}\}.$ This gives a running time of
\begin{align*}
   &  2^{O((\#d\cdot m+m)\cdot m^2)}((\#d\cdot m+m)\cdot m\cdot\max\{p_{\max}, v_{\max}\})^{O((\#d\cdot m+m)^2\cdot m + m^2)}(n\cdot m)^{1+o(1)} \\
   & = \#d\cdot m^{3+o(1)} \cdot\max\{p_{\max}, v_{\max}\})^{O((\#d^2\cdot m^3)}n^{1+o(1)}
\end{align*}
which is FPT time for the parameters $\{m, \#d, v_{\max}, p_{\max}\}$.
\end{proof}
Again, denote the running time of the $n$-fold as $\rtNfold$. Using the $n$-fold framework to compute the $\MMSi$ values and to solve 
our optimization problem follows a similar approach to the proofs of \cref{lem:mnvMMS,lem:mnvApp,lem:mnvDist}. 
As such, in the proofs that follow we simply highlight any modifications
required and the corresponding changes in running time.

We begin with the calculation of the maximin share values. Our approach mirrors that of Lemma~\ref{lem:mnvMMS}, but with changed bounds for the binary search. The maximal valuation of a single machine is now $nv_{\max}$. Other than this difference, the proof is identical to that of \cref{lem:mnvMMS}, so we omit it.
\begin{lemma}
    We can compute $\MMSi$ for each machine $m_i$ in time $O(m \cdot \log(nv_{\max}))\cdot \rtNfold$.
    \label{lem:mdvpMMS}
\end{lemma}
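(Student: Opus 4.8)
The plan is to reuse the $n$-fold feasibility test of \cref{theo:FPTmdvp} as a decision oracle inside a binary search, exactly mirroring \cref{lem:mnvMMS}. Fix a machine $m_i$. By definition, $\MMSi$ is the value machine $m_i$ would guarantee itself by partitioning all jobs into $m$ feasible bundles and receiving the worst one \emph{under its own valuation} $v_i$. Accordingly, I first overwrite the valuation of \emph{every} machine with $v_i$, so that an allocation certifying a common target $\tau$ for all machines is precisely a feasible partition in which every bundle is worth at least $\tau$ to $m_i$. Under this substitution, $\MMSi$ equals the largest integer $\tau$ for which the $n$-fold of \cref{theo:FPTmdvp}, run with the uniform target $\fairValue_i = \tau$ for all $i$, is feasible.

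The second step is the binary search itself. The key structural fact making it valid is monotonicity: if the $n$-fold is feasible for a threshold $\tau$, then it is feasible for every $\tau' \le \tau$, since any allocation meeting the higher bound also meets the lower one. Hence the feasible targets form a downward-closed interval of integers whose right endpoint is $\MMSi$. To delimit the search I observe that, since each job contributes a value in $[-v_{\max}, v_{\max}]$ and a bundle contains at most $n$ jobs, every attainable bundle value lies in $[-nv_{\max}, nv_{\max}]$. Searching this range of width $O(nv_{\max})$ costs $O(\log(nv_{\max}))$ oracle calls, each a single $n$-fold solve taking $\rtNfold$ time. Repeating over all $m$ machines yields the claimed $O(m \cdot \log(nv_{\max})) \cdot \rtNfold$ bound; this is the only place the analysis differs from \cref{lem:mnvMMS}, where the corresponding bound on a bundle's value was $\lambda g_{\max} v_{\max}$ rather than $nv_{\max}$.

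The main point requiring care is the mixed-manna sign handling and the well-definedness of $\MMSi$. Because valuations may be negative, the right endpoint $\MMSi$ can itself be negative, so the binary search must span the full interval $[-nv_{\max}, nv_{\max}]$ rather than starting at $0$; the monotonicity argument above applies unchanged across the sign boundary. Well-definedness relies on the standing assumption that the instance admits at least one feasible schedule (every job can be placed before its deadline), which guarantees that the feasible-target interval is nonempty and hence that the search terminates at a finite value $\MMSi$ rather than $-\infty$.
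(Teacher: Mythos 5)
Your proof is correct and follows essentially the same route as the paper: replace every machine's valuation with $v_i$, then binary search over the common target $\fairValue_i$ using the $n$-fold of \cref{theo:FPTmdvp} as a feasibility oracle, with the search range widened from $\lambda g_{\max}v_{\max}$ to $nv_{\max}$. Your explicit extension of the search interval to $[-nv_{\max}, nv_{\max}]$ to handle negative maximin shares in the mixed-manna setting is a small point of extra care the paper's sketch glosses over, but it does not change the asymptotic bound.
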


Given the maximin share values for each machine $m_i$, we can now adapt our $n$-fold to solve our optimization problems.
    
\begin{lemma}
 We can solve the \multmms \probName{} in time $O(m \cdot \log(nv_{\max}))\cdot \rtNfold$.
 \label{lem:mdvpApp}
\end{lemma}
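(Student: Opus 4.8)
The plan is to mirror the multiplicative-approximation argument of \cref{lem:mnvApp}, substituting the present section's $n$-fold (from \cref{theo:FPTmdvp}) for the layered one, and adjusting only the numerical bounds that feed into the binary search. The key observation carried over from \cref{lem:nAppdp,lem:mnvApp} is that, because all valuations are integral, each machine $m_i$ admits only a discrete set of achievable approximation factors $\alpha$: each corresponds to an integer target value $v_i(\jobs_i)$ lying between $-nv_{\max}$ and $nv_{\max}$. In the current parameterization the maximum absolute value a single machine can obtain is $nv_{\max}$ (rather than $\lambda g_{\max}v_{\max}$), since all $n$ jobs could in principle be assigned to one machine. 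Thus there are $O(nv_{\max})$ candidate factors per machine, and $O(mnv_{\max})$ in total once the sets are merged.

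First I would set up the binary search over this combined sorted set of $O(mnv_{\max})$ candidate approximation factors; this requires $O(\log(mnv_{\max}))$ iterations. Second, at each tested factor $\alpha$, I would instantiate the $n$-fold of \cref{theo:FPTmdvp} by replacing the targeted value $\fairValue_i$ in the Type~$(2)$ constraints with $\lceil \alpha \MMSi \rceil$ for machines with $\MMSi \ge 0$, and with $\lceil \frac{1}{\alpha}\MMSi \rceil$ for machines with $\MMSi < 0$, exactly as in \cref{lem:mnvApp}; the ceiling is harmless because the left-hand side is integral. Third, I would invoke \cref{theo:FPTmdvp} to decide feasibility of this instance, and let the binary search converge to the largest feasible $\alpha$. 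The $\MMSi$ values themselves are supplied by \cref{lem:mdvpMMS}.

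Since each iteration solves one $n$-fold in time $\rtNfold$ and there are $O(\log(mnv_{\max}))$ iterations, the total cost of the search is $O(\log(mnv_{\max}))\cdot \rtNfold$; together with the $O(m\cdot\log(nv_{\max}))\cdot\rtNfold$ cost of \cref{lem:mdvpMMS} for computing all $\MMSi$, this is dominated by $O(m\cdot\log(nv_{\max}))\cdot\rtNfold$, matching the claimed bound. There is no genuine obstacle here: the structural argument is identical to \cref{lem:mnvApp}, so the only substantive step is verifying that the replacement of $\fairValue_i$ by the (rounded) scaled target leaves the block dimensions $r,s,t$ and the largest entry $\Delta$ unchanged, so that the running time $\rtNfold$ of \cref{theo:FPTmdvp} still applies. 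The mildly delicate point I would be careful to state explicitly is the handling of the two cases $\MMSi \ge 0$ and $\MMSi < 0$, since scaling by $\alpha$ versus $\frac{1}{\alpha}$ governs the correct direction of the mixed-manna guarantee; everything else is a routine transcription of the earlier proof.
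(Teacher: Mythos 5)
Your proposal is correct and follows essentially the same route as the paper's proof: the paper likewise invokes the argument of \cref{lem:mnvApp}, replaces the value bound by $nv_{\max}$ to get $O(mnv_{\max})$ candidate factors, runs a binary search with one $n$-fold solve per step, and notes the cost is dominated by the computation of the $\MMSi$ values from \cref{lem:mdvpMMS}. Your write-up is in fact more explicit than the paper's (which states only the changed bound), but no substantive difference exists.
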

\begin{proof}
The proof resembles Lemma~\ref{lem:mnvApp}. 
We now have $mnv_{\max}$ possible values for the multiplicative approximation, yielding the stated running time via binary search. Note that the running times are dominated by the time to find the maximin share value for each machine.
\end{proof}

\begin{lemma}
 We can solve the \addmms \probName{} and \wfmms \probName{} in time $\rtNfold$.
\end{lemma}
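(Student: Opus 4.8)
The plan is to mirror the approach already established for the layered setting in \cref{lem:addMMSadapt,lem:mnvDist}, since the only structural difference between the two $n$-fold formulations is how the scheduling feasibility is encoded; the objective-handling machinery carries over essentially unchanged. First I would recall that, by \cref{lem:mdvpMMS}, we have already computed the true maximin share value $\MMSi$ for every machine $m_i$. With these target values fixed as constants in the IP, both remaining objectives become linear functions of freshly introduced slack variables, and the key observation is that augmenting the global constraints with such variables preserves the $n$-fold block structure.

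For the \addmms \probName{}, I would introduce a single variable $z$ representing the maximum additive deficit across all machines, relax the Type~$(2)$ constraints to
\begin{align}
    \sum_{j_t \in \jobs}x_i^t v_i^t &\geq \MMSi - z &\forall i\in [m]
\end{align}
and minimize $z$. For the \wfmms \probName{}, I would instead introduce one variable $z_i \in [0,\MMSi]$ per machine, relax the constraints to $\sum_{j_t \in \jobs}x_i^t v_i^t \geq \MMSi - z_i$, and minimize $\sum_{i=1}^m z_i$. In both cases the new variables appear only in the global constraints (Type~$(1)$ and Type~$(2)$), so the local Type~$(3)$ blocks are untouched; the number of columns per block and the global matrix dimensions grow by at most an additive constant ($1$ for \addmms, $m$ for \wfmms). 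Consequently $r$, $s$, $t$, and $\Delta$ remain within the same parameter-dependent bounds established in \cref{theo:FPTmdvp}, and the algorithm of Cslovjecsek et al.~\cite{CslovjecsekEHRW21} still runs in time $\rtNfold$.

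The correctness argument reduces to \cref{lem:mdvpnfold}: any feasible assignment of the $x_i^t$ variables satisfying the relaxed global constraints corresponds to a genuine feasible schedule, and the slack variables exactly measure the shortfall $\delta_i = \max\{\MMSi - v_i(\jobs_i), 0\}$ at optimality, since each $z_i$ is pushed down to the smallest nonnegative value consistent with the realized valuation. Thus minimizing $z$ yields the optimal $\max_i \delta_i$ and minimizing $\sum_i z_i$ yields the optimal $\sum_i \delta_i$. I do not anticipate a genuine obstacle here, as this is a direct transfer of the technique from Section~\ref{sec:mvnmax}; the one point requiring mild care is verifying that the added slack variables do not inflate $\Delta$ --- but their coefficients in the constraint matrix are $\pm 1$, so $\Delta = \max\{p_{\max}, v_{\max}\}$ is unchanged, and the stated running time $\rtNfold$ follows immediately.
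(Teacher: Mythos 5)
Your proposal is correct and follows essentially the same route as the paper: the paper's proof is a one-line reference to adapting the $n$-fold exactly as in \cref{lem:addMMSadapt} and \cref{lem:mnvDist}, which is precisely the slack-variable construction you spell out. Your additional checks (that the slack variables sit only in the global constraints, leave the local blocks untouched, and have coefficients $\pm 1$ so $\Delta$ is unchanged) are the right details to verify and match what the paper implicitly relies on.
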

\begin{proof}
        We adapt the $n$-fold as in Lemma~\ref{lem:addMMSadapt} and in Lemma~\ref{lem:mnvDist} and solve it. This immediately gives us the desired running time of $\rtNfold$.
\end{proof}

We may collectively summarize the results of this section in the following theorem.
\begin{theorem}
    For the \probName{} parameterized by $\{m,\#d,v_{\max},p_{\max}\}$ we can in FPT time
    \begin{enumerate}
            \item[(i)] compute the maximin share value of every machine,
        \item[(ii)] determine whether or not a feasible allocation exists given targeted valuations for each machine,
        \item[(iii)] solve the \multmms, \addmms\  and wf-\probName.
    \end{enumerate}
    \label{theo:mdvpobj}
\end{theorem}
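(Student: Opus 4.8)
The plan is to assemble Theorem~\ref{theo:mdvpobj} directly from the lemmas already established in this section, exactly mirroring the structure of the analogous summary Theorem~\ref{theo:mnvobj} in the preceding section. The statement has three parts, and each follows from a specific prior result once we have the $n$-fold of Lemma~\ref{lem:mdvpnfold} whose correctness is certified there. First I would invoke Lemma~\ref{lem:mdvpMMS} for part~(i): running the binary-search-over-targets routine with the valuation function of every machine replaced by that of a single machine $m_i$, and repeating over all $i\in[m]$, computes every maximin share value $\MMSi$ in FPT time. For part~(ii), I would simply cite Theorem~\ref{theo:FPTmdvp}, which shows that for a given vector of targeted values $\fairValue_i$ the $n$-fold decides feasibility and constructs the schedule; combined with the $\MMSi$ values from part~(i) this determines whether a feasible allocation meeting the targets exists. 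For part~(iii), the three objectives are handled by Lemmas~\ref{lem:mdvpApp} (the \multmms\ objective via binary search over the $O(mnv_{\max})$ discrete approximation factors) and the subsequent lemma (the \addmms\ and \wfmms\ objectives, obtained by adapting the $n$-fold with the auxiliary variables $z$ and $z_i$ as in Lemmas~\ref{lem:addMMSadapt} and~\ref{lem:mnvDist}).

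The main body of the proof is therefore bookkeeping rather than new mathematics: each part is a one-line deduction from a lemma, and the only substantive content is observing that all stated running times are bounded by a polynomial in $|I|$ times a function of the parameters $\{m,\#d,v_{\max},p_{\max}\}$, hence FPT. I would note explicitly that the binary-search factors $\log(nv_{\max})$ contribute only polynomial overhead in $|I|$, and that the auxiliary variables added for the additive and welfare objectives enlarge each block by a single column, leaving the $n$-fold dimensions $r,s,t,\Delta$ within their parameter-bounded values, so the running time $\rtNfold$ of Theorem~\ref{theo:FPTmdvp} is preserved.

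There is essentially no obstacle here, since all the hard work—the $n$-fold construction, the parameter bounds, and the four optimization reductions—was discharged in the lemmas preceding this theorem. If anything, the one point requiring a moment's care is confirming that the $\MMSi$ computation in part~(i) and the feasibility test in part~(ii) compose correctly: the targets $\fairValue_i$ fed into the feasibility routine of Theorem~\ref{theo:FPTmdvp} must be exactly the $\MMSi$ values produced by Lemma~\ref{lem:mdvpMMS}, and one must check that substituting a single machine's valuation into all machines during the $\MMSi$ computation does not violate any structural assumption of the $n$-fold. Since the $n$-fold's dimensions depend only on the parameter set and not on the particular valuation entries (only $v_{\max}$ bounds them), this substitution is harmless. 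Consequently the proof reduces to citing Lemmas~\ref{lem:mdvpMMS}, \ref{lem:mdvpApp}, the additive/welfare lemma, and Theorem~\ref{theo:FPTmdvp}, and concluding that each runs in FPT time for the parameterization $\{m,\#d,v_{\max},p_{\max}\}$.
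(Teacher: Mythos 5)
Your proposal is correct and matches the paper's intent exactly: the paper states this theorem as a collective summary of the section with no separate proof, relying implicitly on Theorem~\ref{theo:FPTmdvp}, Lemma~\ref{lem:mdvpMMS}, Lemma~\ref{lem:mdvpApp}, and the additive/welfare lemma, which is precisely the assembly you describe. Your additional remarks on composing the $\MMSi$ computation with the feasibility test are sound and, if anything, slightly more careful than the paper itself.
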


We conclude this section with a remark about $d_{\max}$, the maximum deadline.
We did not include $d_{\max}$ as one of our six parameters because $d_{\max} \ge \max (\#d, p_{\max})$. Of course, this observation implies that our
FPT algorithms extend for the parameterization $\{m,d_{\max},v_{\max}\}$.
Indeed, in this case Theorem~\ref{theo:mdvpobj} applies.

\section{Hardness Results}\label{sec:hardness}
In this section we prove that our FPT time algorithms use the minimal subsets of parameters necessary to ensure fixed-parameter tractability.
Specifically we prove that, for any strict subsets of our three parameterizations, our problems become NP-hard, even if the parameters have {\em constant} values. This implies no FPT algorithms exist for stricter parameterizations, assuming $NP \neq P$. 

Futher, note that $\#d \leq d_{\max}$. Thus, $d_{\max}$ is the \emph{weaker} parameterization (i.e., the larger number), and thereby the hardness in every of our results also holds if we parameterize by $\#d$ instead of $d_{\max}$.

\subsection*{The \probName{} is NP-hard for constant $d_{\max}, v_{\max}, p_{\max}$}
We begin by proving that even if $m, d_{\max}, p_{\max}$ are constants, the \probName{} is NP-hard. We provide a reduction from a variant of 3-SAT called 3-SAT'. Like in 3-SAT, we are given a formula $\rho$ in conjuctive normal form with clauses of size at most $k=3$. Additionally, each variable $x^i$ occurs at most 3 times, and each literal (a variable or its negation) occurs at most twice. Deciding the feasibility of such an instance is NP-complete~\cite{PapadimitriouY91}. The reduction is designed such that it is NP-hard to decide whether there exists a schedule such that every machine receives an MMS value of at least 0. This reduction is inspired by one presented for the Graph Balancing problem~\cite{EbenlendrKS08}. 

\begin{theorem}\label{thm:hard1}
    The \probName{} is NP-hard if $d_{\max}, v_{\max}, p_{\max} \in O(1)$.
\end{theorem}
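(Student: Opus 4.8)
The plan is to reduce from the 3-SAT$'$ variant described above, constructing a \probName{} instance with a constant number of machines $m$ and constant $d_{\max}, v_{\max}, p_{\max}$, such that a satisfying assignment exists if and only if a feasible schedule giving every machine an MMS value of at least $0$ exists. Following the cited Graph Balancing reduction, I would introduce one \emph{variable machine} per variable $x^i$ and encode the truth value of $x^i$ by which of two ``slots'' on that machine the relevant variable-jobs occupy. The key gadget is a pair of jobs per variable that can be scheduled feasibly on the variable machine in exactly two ways, corresponding to setting $x^i$ to true or false; the deadlines $d_t$ (bounded by a constant since each variable occurs at most $3$ times and each literal at most twice) force exactly these two configurations. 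Clause satisfaction would then be enforced by \emph{clause machines}: each clause $C$ gets a machine that must receive at least one literal-job from a literal it contains that is set to true, which I encode via valuations so that the machine attains value $\geq 0$ precisely when at least one satisfying literal-job lands on it.

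The main steps, in order, are as follows. First I would fix the machine set: one machine per variable and one per clause (and possibly a constant number of auxiliary machines), and argue $m$ need not be constant --- here it is \emph{not} a parameter, so $m$ may grow, which is exactly why the hardness is meaningful. Second, I would define the jobs: for each variable, the two truth-setting jobs and the literal-occurrence jobs (at most $3$ per variable by the degree bound), all with processing times drawn from a constant-size set so that $p_{\max}=O(1)$. Third, I would choose the deadlines from a constant-size set (using the bounded occurrence structure) so that $d_{\max}=O(1)$; this is where the 3-SAT$'$ restriction is essential, since unrestricted occurrence counts would force large deadlines. Fourth, I would assign valuations $v_i(j_t)\in\{-1,0,1\}$ (or another constant-size set) so that $v_{\max}=O(1)$ and so that the MMS threshold $0$ is achieved on a machine exactly under the intended combinatorial condition. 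Fifth, I would verify the two directions: a satisfying assignment yields a feasible schedule with all machines at value $\geq 0$, and conversely any such schedule, by the forced gadget behavior, reads off a consistent satisfying assignment.

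The technical heart, and the step I expect to be the main obstacle, is the simultaneous \emph{constantness} of all three of $d_{\max}, v_{\max}, p_{\max}$ together with \emph{correctness} of the encoding. Each constraint individually is easy, but making the deadlines and processing times small constants while still forcing the variable gadget into exactly two feasible configurations --- and no unintended ``cheating'' configuration --- requires careful accounting of how occurrence-jobs pack on the variable machine versus the clause machines. In particular I must ensure that the only way a clause machine reaches value $0$ is through a literal-job whose truth-setting on its variable machine is consistent, so that the EDF-feasibility on the variable machines and the value constraint on the clause machines cannot be decoupled. I would handle this by exploiting that each literal occurs at most twice and each variable at most three times, which keeps the number of jobs per machine bounded by a constant and hence the deadlines and loads bounded; this is precisely the role the 3-SAT$'$ degree restriction plays, and verifying that no spurious schedule achieves all-nonnegative values is the delicate part of the argument.
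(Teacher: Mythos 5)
Your high-level plan matches the paper's: both reduce from the bounded-occurrence variant 3-SAT$'$, both take inspiration from the Graph Balancing reduction, both use variable gadgets plus one machine per clause, and both aim for the equivalence ``satisfiable iff every machine can reach value $\geq 0$.'' However, what you have written is an architecture, not a proof: the one step you yourself identify as ``the technical heart'' --- a concrete gadget in which deadlines, processing times and valuations are all $O(1)$ \emph{and} the variable machine's configuration provably controls where the literal-jobs can go --- is exactly the content of the theorem, and it is left unconstructed. In particular, your clause mechanism (``the clause machine attains value $\geq 0$ precisely when at least one \emph{true} literal-job lands on it'') presupposes some forced negative load on the clause machine that a positively-valued satisfying job must cancel, and you never specify how that deficit is created or why a literal-job whose variable is set the ``wrong'' way cannot play that role. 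Without that, the backward direction (schedule $\Rightarrow$ assignment) does not go through.

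For comparison, the paper closes this gap with a construction whose logic is essentially dual to yours. It uses \emph{two} machines per variable, one per literal, rather than one machine with two slots. An assignment-job of processing time $2$ and deadline $2$ must occupy one of the two literal machines entirely, blocking it; this encodes the truth value. Each literal-job has processing time $1$, deadline $2$, and value $0$ only on its clause machine and its own literal machine, value $-1$ everywhere else; since no job has positive value, a single $-1$ can never be compensated, so every job is confined to its intended machines. A dummy job of size $3-k$ pads each $k$-literal clause machine to capacity exactly $3$ with deadline $2$... more precisely, the clause machine can feasibly hold at most two literal-jobs, so at least one literal-job per clause must escape to its literal machine, which is feasible only if that machine is not blocked by the assignment-job, i.e.\ only if the literal is true. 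Note the direction: the clause machine absorbs the \emph{false} literals and the capacity bound forces a true one out, whereas you wanted it to attract a true one in. Your variant might be made to work, but you would need to actually build the deficit gadget and prove no ``cheating'' schedule exists; as it stands, the proposal defers precisely the argument that constitutes the proof.
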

\begin{proof}
Given a 3-SAT' formula $\rho$ with variables $x^i$, we construct an instance of the \probName. We introduce one \emph{literal-machine} for each literal, i.e., each variable is represented by two machines $m_{x^i}, m_{\neg x^i}$ for the variable and its negation. Further, we construct machines $m_\beta$ for each clause $\beta$ in $\rho$. 

We create two types of jobs. We have \emph{assignment-jobs} $j_{x^i}$ for each variable $x^i.$ These have processing time $p_{x^i}=2$ and deadline $d_{x^i}=2.$ These jobs have value $0$ for their corresponding variable-machines $m_{x^i}, m_{\neg x^i}$ and a value $-1$ for any other machine. 
Next, we create \emph{literal-jobs}. Let $\beta=(x^i\lor \ldots \lor x^{j})$ be a clause in $\rho$ with $k$ literals. Note that $k\leq 3$. We construct jobs $j_{\beta,x^\ell}$ for each clause $\beta$ in $\rho$ and its literals $x^\ell$. They have processing time $p_{\beta,x^{i}}=1$ and deadline $d_{\beta,i}$. These jobs have value $0$ to their clause machine $m_\beta$ and their literal machines $m_{\beta,x^{i}}$, and value $-1$ for any other machine. Finally, if $k<3,$ we add a \emph{dummy job} $j_{\beta,\ell}$ with deadline $d_{\beta,\ell}=2$ and processing time $p_{\beta,\ell}=3-k.$ This job has value $0$ for only its clause machine $m_\beta$ and value $-1$ for every other machine. Clearly, this transformation can be computed in polynomial time. Further, we have $v_{\max}=1, d_{\max}\le d_{\max}=2$ and $p_{\max}=2$;  thus, these parameters have constant size.

First, we show that a yes instance of 3-SAT' implies a schedule where no machine has a negative valuation. Suppose we have a satisfying assignment $\sigma$ of our formula $\rho.$ For a variable $x^{i}$, we assign the assignment-job $j_{x^{i}}$ to $m_{\neg x^{i}}$ if $x^i$ is true in the solution to 3-SAT'. Otherwise, we assign it to $m_{\neg x^{i}}.$ Next, we assign the literal-jobs. We assign a literal job $j_{\beta,x^{i}}$ to its clause machine $m_\beta$ if its assignment evaluates as false in the solution, i.e., if variable $x^i$ is assigned false in the solution we assign literal-jobs that have value zero to $m_{x^{i}}$ to their respective clause machines. On the other hand, literal-jobs with value 0 for machine $m_{\neg x^{i}}$ are assigned to that very machine. Finally, dummy jobs $j_{\beta,\ell}$ are assigned to their clause machine $m_{\beta}$. As $\sigma$ is a satisfying assignment, there are at most two jobs assigned to each literal machine, thereby fitting inside the deadline. Furthermore, literal jobs are only assigned to the assignment given by $\sigma,$ so they do not occupy the same machine as the assignment jobs. Therefore, they also all fit the deadline. Finally, the dummy jobs also fit inside the deadline, as they occupy exactly as much processing time as is missing for a true-evaluation of any clause. All jobs are only assigned to machines they have value $0$ to, so no machine has a negative valuation.

Now, we show that a yes instance of the \probName{} where every machine has valuation at least 0 implies a satisfying assignment $\sigma$ to $\rho.$ We set variables in $\rho$ such that $x^i$ is true if the assignment job $j_{x^i}$ is placed upon machine $m_{\neg x^{i}}$ and false otherwise. Now consider any arbitrary clause $\beta.$ As there are at most two jobs placed upon the clause machine $m_\beta$ and the remaining literal-job is assigned to its corresponding machine (otherwise it would exceed the deadline of two and not be a feasible schedule), at least one literal inside this clause must evaluate as true. If any literal-job were assigned to a machine other than its clause or its literal, it would lead to a valuation of -1 to that machine. As no job has a positive valuation, this cannot be compensated. Thus, every clause must evaluate to true in order for there to exist a schedule with valuation 0 to all machines. This yields a satisfying assignment $\sigma$ for the 3-SAT' problem.
\end{proof}

\subsection*{The \probName{} is NP-hard for constant $m, d_{\max}, p_{\max}$}
Next we study the case of parameterizing by $m, d_{\max}, p_{\max}$. We prove hardness by reducing from \partition{}. In the \partition{} problem, we are given a multiset $S$ of $n$ positive integers $s_1, s_2, \ldots, s_n$. The question to decide is whether there exists a partition of $S$ into two multisets $S_1, S_2$ such that $\sum_{i\in S_1}s_i=\sum_{j\in S_2}s_j$ holds. This problem is well-known to be NP-hard.

\begin{theorem}\label{thm:hard2}
    The \probName{} is NP-hard if $m, d_{\max}, p_{\max}~\in~O(1)$.
\end{theorem}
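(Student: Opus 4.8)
The plan is to reduce from \partition{} so that the parameters $m$, $d_{\max}$, and $p_{\max}$ all remain constant, while the size of the input numbers is encoded through the \emph{valuations} rather than through processing times or deadlines. The key observation is that in Theorem~\ref{thm:hard2} we are allowed to let $v_{\max}$ be large, so the magnitudes $s_i$ of the \partition{} instance should be carried entirely by the valuation functions. Concretely, I would fix $m=2$ machines and create one job $j_i$ for each number $s_i$, setting the processing time and deadline to small constants (for instance $p_i=d_i=1$, so that every bundle is trivially feasible regardless of its size and $d_{\max}=p_{\max}=1 \in O(1)$). I would then give both machines the same valuation $v_1(j_i)=v_2(j_i)=s_i$ for job $j_i$, so that $v_{\max}=\max_i s_i$, which is allowed to grow.

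The heart of the argument is to choose the target MMS value so that achieving it forces an exactly balanced split. With $t=\tfrac{1}{2}\sum_i s_i$, each machine's maximin share value is exactly $t$ when a perfect halving exists, and otherwise $\lfloor$ the best achievable minimum bundle value $\rfloor$. I would set $\fairValue_1=\fairValue_2=t$ and ask whether a feasible allocation giving each machine value at least $t$ exists. Since the total value is $2t$ and both machines use the identical valuation, the only way both bundles can have value at least $t$ is if both have value \emph{exactly} $t$; this is precisely a solution to \partition{}. Feasibility of the schedule is never an obstacle here because the processing times and deadlines are constant and any assignment of unit-length jobs with deadline $1$ is trivially infeasible if more than one lands on a machine — so I must instead set $d_i$ and $p_i$ large enough (e.g.\ $p_i=1$, $d_i=n$) that the scheduling constraint is vacuous, keeping $d_{\max}=n$; but that would make $d_{\max}$ non-constant.

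The main obstacle, therefore, is reconciling constant $d_{\max}$ and $p_{\max}$ with the need to let \emph{all} $n$ jobs fit on a single machine: if deadlines are constant but there are many jobs, the EDF feasibility constraint could spuriously forbid large bundles. The clean fix is to decouple value from load by giving every job processing time $p_i=0$ (allowed since the model permits $p_t\in\mathbb{Z}^+$, or $p_i=1$ with a correspondingly constant but sufficiently generous deadline if strict positivity is required) so that deadlines never bind and $d_{\max},p_{\max}\in O(1)$; then the only constraint is the value constraint. With this decoupling in place, the forward direction (a \partition{} solution $(S_1,S_2)$ yields the allocation $\jobs_1=\{j_i:i\in S_1\}$, $\jobs_2=\{j_i:i\in S_2\}$, each of value exactly $t=\MMSi$) and the reverse direction (any feasible allocation meeting both targets gives two bundles of value $\ge t$ summing to $2t$, hence each exactly $t$, hence a \partition{} solution) both follow immediately. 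Since the transformation is polynomial and leaves $m=2$, $d_{\max}$, $p_{\max}$ all constant, NP-hardness of \partition{} transfers to the \probName{} under this parameterization.
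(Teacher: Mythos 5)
Your reduction is essentially identical to the paper's: reduce from \partition{} with $m=2$, encode the numbers $s_i$ in the identical valuations $v_1(j_i)=v_2(j_i)=s_i$, set $p_i=d_i=0$ so the scheduling constraints are vacuous, and observe that both machines reaching the target $t=\tfrac12\sum_i s_i$ forces an exact halving. One small caveat: your parenthetical fallback of $p_i=1$ with a ``constant but sufficiently generous deadline'' is not actually viable (a bundle of $k$ unit jobs needs deadline at least $k$, which is not constant), but your primary choice of zero processing times and deadlines is exactly what the paper does.
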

\begin{proof}
    We transform an instance of \partition{} to an instance of the \probName. Let $m=2$.
    Let $S$ be the multiset of integers and $s_i$ be the value of the $i$-th element. For every $s_i$, we create a new job $j_i$ with processing time $p_i=0$, deadline $d_i=0$, and a valuation of $v_1(j_i)=v_2(j_i)=s_i$. The question is whether there exists a schedule such that the optimal MMS value of both machines is exactly $\sum_{i\in S}s_i / 2$.

    A solution $S^*=S_1,S_2$ of \partition{} directly implies an optimal split for the \probName. We partition the jobs into $S_1$ and $S_2$ and assign each machine one such partition. As the valuation on both machines is identical to $s_i$, both machines are assigned the same value, optimizing their maximin share values. Furthermore, as all processing times and deadlines are equal to zero, both sets can feasibly be scheduled on a machine.

    For the other direction, an MMS value optimal schedule $\sigma$ implies a solution $S^*$ for \partition. Let $\sigma_1$ be the set of jobs assigned to the first machine and $\sigma_2$ the set of jobs on machine two. As $\sigma$ is optimal, both machines have the same maximin share values. Construct $S_1$ and $S_2$ by taking the numbers from the transformed sets $\sigma_1$ and $\sigma_2$ respectively. As both machines have the same maximin share values, $\sum_{i\in S_1}s_i=\sum_{j\in S_2}S_j$ must hold. Thus, this is a feasible solution to \partition{} proving it is a yes instance.
\end{proof}

\subsection*{The \probName{} is NP-hard for constant $m, p_{\max}, v_{\max}$}
In this section, we prove that even if $m, p_{\max}, v_{\max}$ are constants, the \probName{} is NP-hard. Our reduction is again from \partition{}. The main difference is that we need to utilize deadlines to assign batches of specific overall value to one machine. In other words, we mimic different values of the jobs by batches of jobs with the same specifics w.r.t. processing time and valuations.

\begin{theorem}\label{thm:hard3}
    The \probName{} is NP-hard if $m, p_{\max}, v_{\max} \in O(1)$.
\end{theorem}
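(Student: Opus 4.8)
The plan is to reduce from \partition{} by simulating distinct integer values through \emph{batches} of unit-value jobs, since here we may no longer use large valuations (as $v_{\max}$ must be constant). The central idea is to use the deadlines as the mechanism that forces a whole batch of identical jobs onto a single machine, thereby encoding the large number $s_i$ as a group of $s_i$ jobs that cannot be split across machines. Concretely, I would set $m=2$ and, for each integer $s_i$ in the \partition{} instance, introduce a batch of $s_i$ unit jobs, each with processing time $p=1$ and valuation $v_1=v_2=1$, so that $p_{\max}=v_{\max}=1$ are constant. The deadlines are chosen to act as a gadget tying each batch together.

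\textbf{Key steps.} First I would describe the gadget that makes each batch indivisible: assign all jobs of batch $i$ a common deadline $d_i$ placed so tightly that the $s_i$ jobs of this batch \emph{must} be scheduled consecutively and entirely on one machine, otherwise some job misses its deadline. A standard way to enforce this is to nest the deadlines so that the prefix of processing times up to batch $i$ exactly equals $d_i$ on the machine that receives it, forcing an all-or-nothing choice per batch and a balanced total load across the two machines. Second, I would set the target MMS value for each machine to $\tfrac{1}{2}\sum_i s_i$, matching the construction of Theorem~\ref{thm:hard2}. Third, I would prove both reduction directions: a balanced partition $(S_1,S_2)$ yields a feasible schedule in which machine $m_1$ receives exactly the batches in $S_1$ (total value and load $\tfrac12\sum_i s_i$) and $m_2$ the batches in $S_2$, all deadlines met by construction; conversely, any schedule meeting the MMS targets must, by the indivisibility gadget, assign each batch wholly to one machine, and since both machines attain value $\tfrac12\sum_i s_i$, the induced grouping of the $s_i$ is a valid \partition.

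\textbf{Main obstacle.} The delicate part is designing the deadline gadget so that batches are genuinely indivisible while keeping $d_{\max}$ (equivalently $\#d$) \emph{unconstrained} — recall this theorem is precisely the case where deadlines are \emph{not} a parameter, so we are permitted a growing number of distinct, large deadlines. I must ensure that the tightness of the deadlines leaves no slack that would let a job from one batch be swapped for a job of another batch (which would otherwise preserve feasibility but break the correspondence to \partition{}). The cleanest approach is to order the batches and give batch $i$ a deadline equal to the cumulative processing time of all batches up to and including $i$ on the machine that holds it, so that EDF scheduling forces a rigid packing; I would verify carefully that this nesting admits exactly the all-or-nothing batch assignments and nothing more, so that feasibility of the schedule is equivalent to a balanced split of the cumulative loads. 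Once this indivisibility is established, the valuation bookkeeping and the equivalence with \partition{} follow as in Theorem~\ref{thm:hard2}.
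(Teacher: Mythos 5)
There is a genuine gap, and it sits exactly where you flagged the ``delicate part'': the deadline gadget you describe cannot make a batch indivisible. A deadline constraint is an \emph{upper bound} on a prefix load --- a bundle is feasible iff, for every job $j_t$ in it, the total processing time of jobs with deadline at most $d_t$ assigned to that machine is at most $d_t$. This condition is monotone under removing jobs from a machine, so splitting a batch across the two machines can only \emph{decrease} each prefix load and hence never destroys feasibility. Concretely, with your nested deadlines $d_i=\sum_{k\le i}s_k$, the jobs with deadline at most $d_i$ have total processing time exactly $d_i$ across \emph{both} machines, so every single assignment of unit jobs is deadline-feasible; the value condition then just asks whether $\sum_i s_i$ unit-value jobs can be split into two halves, which is a yes-instance whenever $\sum_i s_i$ is even, independently of whether \partition{} is solvable. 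No choice of deadlines for a single batch of jobs per number $s_i$ can repair this, because the only way a deadline can force jobs onto machine $m_1$ is by making machine $m_2$ too full to accept them.

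The paper's proof supplies precisely the missing ingredient: for each $s_i$ it creates \emph{two} complementary batches of total processing time $10s_i$ each --- one carrying value $s_i$ (jobs of length $10$ with deadlines at multiples of $10$ inside the $i$-th time window) and one carrying value $0$ (jobs of lengths $8,10,\dots,10,12$ with deadlines offset by $8$). The total processing time then equals twice the final deadline, so both machines must be packed with no idle time, and the staggered deadlines prevent mixing the two batches of a window on one machine; this is what forces the all-or-nothing choice per number, with one machine gaining value $s_i$ and the other value $0$ in each window. (Note also that the paper's Theorem~\ref{thm:hard4} shows what happens if one instead keeps a single job per number with bounded value: one is forced to reduce from \ecpartition{} rather than \partition{}.) Your high-level plan --- simulate large values by batches of bounded-value jobs and use deadlines to bind them --- matches the paper's intent, but without the paired zero-value filler batches the reduction does not go through.
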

\begin{proof}
    We transform an instance of \partition{} to an instance of the \probName.
    Let $m=2$. For every number $s_i$ in the \partition{} instance, we create $2 s_i$ jobs $j_1^i, \ldots, j_{2 s_i}^i$. The first $s_i$ jobs have processing times $p_t^i=10$, valuation function $v_1(j_t^{i}) = v_2(j_t^{i}) = 1$
    and deadlines $d_{t}^i=\sum_{k=1}^{i-1}10s_k+10t$ for the $t = 1, \ldots, s_i$. The remaining $s_i$ jobs have processing time $p_{s_i+1}^i=8, p_{t}^i={10}, p_{2s_i}^i = 12$ and deadlines $d_{s_i+1}^i=\sum_{k=1}^{i-1}10s_k+8,  d_{j_t^i} = \sum_{k=1}^{i-1}10s_k+8+10(\ell-1), d_{j^i_{2s_i}}\sum_{k=1}^{i-1}10s_k+10s_i$ for $t = s_i+1, \dots, 2s_i-1$. These jobs have $v_1(j_t^{i}) = v_2(j_t^{i}) = 0$ for $t = s_i+1, \ldots, 2s_i$.
    If $s_i = 1$, then only two jobs are created, both with processing time $10$, deadline $d_{s_i+1}^i=\sum_{k=1}^{i-1}10s_k+10$, and evaluation functions and remaining deadlines as described before. 
    
    Note that due to the nature of their deadlines, the first $s_i$ jobs $j_1^i, \ldots, j_{s_i}^i$ need to be placed onto the same machine, and the remaining $j_{s_i+1}^i, \ldots, j_{2 s_i}^i$ onto the other machine respectively. The question is whether there exists a schedule such that both machines get a MMS value of $\sum_{i=1}^n s_i/2$.
   
    We first show that a solution $(S_1, S_2)$ to \partition{} implies a schedule with MMS value of $\sum_{i=1}^n s_i/2$. We construct a schedule $\sigma$. For each element $s_i \in S_1$, we assign the set of jobs $j_1^i, \ldots, j_{s_i}^i$ to machine $m_1$. This yields an additive term of $s_i$ to the valuation of $m_1$. For each element $s_i \notin S_1$, we assign the set of jobs $j_{s_i+1}^i, \ldots, j_{2 s_i}^i$ to machine $m_1$, not changing the valuation. The remaining jobs go to $m_2$. As $(S_1, S_2)$ is a partition, the resulting schedule results in a valuation for both machines of $\sum_{i=1}^n s_i/2.$ Both of these machines can be scheduled without exceeding a single deadline. 

    Now we show that a yes instance of the \probName{} implies a yes instance of \partition. Let $\sigma$ be the schedule in which both machines have a valuation of $\sum_{i=1}^n s_i/2$. Due to the structure of the deadlines, this schedule must be separated into $n$ layers, each with size $10s_i$. Let us inspect the first block closer. W.l.o.g.\ assume that machine $m_1$ holds the first job of set $j_1^1$. This job has a processing time of 10 and a deadline of exactly 10, so it must be placed first inside the schedule. Then, the next jobs on $m_1$ in this schedule must also be $j_2^1, \ldots, j_{s_i}^1$. Hence, the jobs $j_{s_i+1}^1, \ldots, j_{2s_i}^1$ must be placed on the respective other machine. This pattern repeats for every layer, i.e., every number corresponding additive valuation term $s_i$ either being placed on $m_1$ or $m_2$ as a batch. Thus, the placement of batches with positive evaluation on $m_1$ yields the partition $S_1$, and the remaining numbers are in $S_2$ solving the \partition{} problem.
\end{proof}

\subsection*{The \probName{} is NP-hard for constant $m, \#d, v_{\max}$}
Next we study the case of parameterizing by $m, \#d, v_{\max}$. We prove hardness by reducing from \ecpartition{}. In the \ecpartition{} problem, we are given a multiset $S$ of $n$ positive integers $s_1, s_2, \ldots, s_n$. The question to decide is whether there exists a partition of $S$ into two multisets $S_1, S_2$ with $|S_1|=|S_2|=\frac12n$ such that $\sum_{i\in S_1}s_i=\sum_{j\in S_2}s_j$ holds. This problem is NP-hard.

\begin{theorem}\label{thm:hard4}
    The \probName{} is NP-hard if $m, \#d, v_{\max}~\in~O(1)$.
\end{theorem}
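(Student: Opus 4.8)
The plan is to reduce from \ecpartition{} (Equal-Cardinality Partition), mirroring the structure of the preceding hardness proofs but now exploiting \emph{deadlines} as the mechanism that enforces the cardinality balance. Recall that in the reduction of Theorem~\ref{thm:hard3} we were forced to encode the numeric values $s_i$ via processing times, which in turn forced large deadlines; here, since $p_{\max}$ is no longer a parameter we must keep constant, I would instead encode each value $s_i$ directly in a job's processing time while using the valuation to enforce equal cardinality. Concretely, set $m=2$ and for each integer $s_i$ create a single job $j_i$ with processing time $p_i = s_i$, a common deadline, and valuation $v_1(j_i)=v_2(j_i)=1$. The uniform valuation $1$ keeps $v_{\max}=1$ constant, and the idea is that ``value'' now counts \emph{cardinality} rather than numeric sum.

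The key design step is to make the two requirements of \ecpartition{}---equal numeric sum \emph{and} equal cardinality---correspond to the two simultaneous constraints in the scheduling instance. First I would use a single shared deadline $D = \tfrac12\sum_i s_i$ so that a feasible schedule on each of the two machines must have total processing time at most $D$; since the total processing time across both machines is exactly $2D$, feasibility forces each machine to receive jobs of total processing time exactly $D$, which is precisely the numeric-balance condition $\sum_{i\in S_1}s_i=\sum_{i\in S_2}s_i$. Second, the targeted MMS requirement would ask that each machine attain valuation exactly $\tfrac12 n$; since every job contributes valuation $1$, this forces each machine to receive exactly $\tfrac12 n$ jobs, which is the equal-cardinality condition $|S_1|=|S_2|=\tfrac12 n$. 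Thus the two conditions of \ecpartition{} map cleanly onto feasibility (deadlines) and the MMS target (valuations). I would then verify that $\#d=1$ (a single distinct deadline), $v_{\max}=1$, and $m=2$ are all constant, while $p_{\max}$ is allowed to grow with the input, so this is indeed the claimed parameterization.

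For the correctness argument I would prove both directions. For the forward direction, given an equal-cardinality partition $(S_1,S_2)$ I assign the jobs of $S_1$ to $m_1$ and those of $S_2$ to $m_2$; the numeric balance guarantees each machine's load equals $D$ (so all jobs meet the common deadline), and the cardinality balance guarantees each machine's valuation is exactly $\tfrac12 n$, certifying a yes-instance with MMS value $\tfrac12 n$ on each machine. For the reverse direction, given a feasible schedule meeting the MMS target, the deadline constraint forces each machine's processing time to be at most $D$, and since the two loads sum to $2D$ both must equal $D$ exactly, yielding numeric balance; simultaneously the valuation target $\tfrac12 n$ together with unit valuations forces exactly $\tfrac12 n$ jobs per machine, yielding cardinality balance. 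Reading off $S_1$ and $S_2$ from the two bundles then solves \ecpartition{}.

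The main obstacle I anticipate is \emph{decoupling} the two constraints so that the valuation genuinely certifies cardinality and the deadline genuinely certifies numeric balance without the two interfering; in particular one must confirm that the MMS value of each machine is determined in the intended way, i.e.\ that an optimal partition in the definition of $\text{MMS}_i$ does not allow a machine to ``cheat'' by using a different split than the intended one. Because $m=2$, the $\max$--$\min$ in the definition of $\text{MMS}_i$ reduces to the classical ``I divide, you choose'' two-agent case, which should make this manageable, but I would need to argue carefully that setting the target to $\tfrac12 n$ is both necessary and sufficient and is not accidentally achievable by an unbalanced split that still happens to satisfy the deadline. Verifying that the EDF feasibility check interacts correctly with a single common deadline is the remaining routine-but-essential detail.
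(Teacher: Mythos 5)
Your reduction is exactly the one the paper uses: the same \ecpartition{} source problem, $m=2$, one job per integer with processing time $s_i$, a single common deadline $\tfrac12\sum_i s_i$, unit valuations, and target value $\tfrac12 n$, with the deadline enforcing numeric balance and the valuation enforcing cardinality balance. The paper states the verification as ``straight-forward'' where you spell it out, but the construction and correctness argument coincide.
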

\begin{proof}
    We transform an instance of \ecpartition{} to an instance of the \probName. Let $m=2$.
    Let $S$ be the multiset of integers and $s_i$ be the value of the $i$-th element. For every $s_i$, we create a new job $j_i$ with processing time $p_i=s_i$, deadline $d_i=\frac12\cdot \sum_{j\in S} s_j$, and a valuation of $v_1(j_i)=v_2(j_i)=1$. Thus $\#d=1$ and $v_{\max}=1$. It is straight-forward to verify that there exists a feasible schedule such that the optimal MMS value of both machines is exactly $\frac12 n$ if and only if an equal-cardinality partition exists. 
\end{proof}
Theorems~\ref{thm:hard1}-\ref{thm:hard4} prove that the parameters $\{m, \#d, v_{\max}, p_{\max}\}$ form a minimal set to ensure fixed-parameter tractability. Similar hardness reductions apply for the parameters $\{m, g_{\max}, v_{\max}\}$. (The parameterizations of $\{g_{\max},v_{\max}\}$ and $\{m,v_{\max}\}$ are covered by \cref{thm:hard1,thm:hard4}. To show hardness of $\{m,g_{\max}\}$, we can utilize a similar reduction to \cref{thm:hard3}, where both jobs belonging to one input number $s_i$, one with valuation $s_i$ and the other with valuation $0$, of the \partition{} problem form a separate layer.) The parameter set $\{n\}$ is trivially minimal.

\section{Extensions to admit rejection of jobs}\label{sec:extensions}
In scheduling theory, jobs admitting deadlines often also permit the planner to schedule them after their respective deadline, see e.g.~\cite{LenstraShmoysEle20}. Doing so incurs some sort of penalty $w$, either a unit penalty per late job or some weight $w_t$ assigned to each job $t$. A natural question is whether our algorithms can be extended to admit rejections while still providing optimal results. This is indeed the case. In fact, only a slight modification to our algorithms is required to also admit late jobs. 

We introduce an additional machine $m_{\text{late}}$, on which we place the late jobs. The valuation function of this additional machine is always negative and reflects the penalties of the jobs, i.e.\ a job with a penalty $w$ has a valuation of $-w$ on $m_{\text{late}}$. Machine $m_{\text{late}}$ is the only machine on which we do not conduct our feasibility checks, i.e.\ we allow bundles that do not conform to their deadlines. Then, optimizing the maximin share values of machines can be computed as given above, and we can set a limit of penalties for late jobs by capping the maximum absolute value assigned to $m_{\text{late}}$. This adaptation allows us to solve problems on an entirely new axis: How large of a penalty must we incur for us to achieve an $\delta$-additive or $\alpha$-multiplicative maximin share value approximation? To compute this penalty value, we simply insert the targeted MMS values for all machines in the frameworks given in~\cref{theo:FPTdpn,theo:FPTLayers,theo:FPTmdvp}. Thus, we require a new parameter to bound the largest entry in our $n$-fold algorithms. This parameter is the maximum penalty value of a single job, which we denote by $w_{\max}$. Next, we conduct a binary search of the penalty value in bounds $[0,nw_{\max}].$ Thus, we require $O(\log(nw_{\max}))$ many iterations of any of the given algorithms to decide this question. Overall, this allows us to formulate the following theorem:

\begin{theorem}
 The $\times$-, \addmms, \wfmms MMS scheduling with deadlines and rejections problem is fixed parameter tractable when parameterized by (i) $n$, (ii) $\{m,v_{\max}, n_{max},w_{\max}\}$, (iii) $\{m,\#d,p_{\max},v_{\max},w_{\max}\}.$
\end{theorem}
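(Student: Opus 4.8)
The plan is to show that adding an artificial machine $m_{\text{late}}$ reduces the rejection variant to the original (rejection-free) problem, so that each of the three FPT frameworks can be invoked essentially unchanged. First I would formalize the reduction: given an instance with penalties $w_t$ for scheduling job $j_t$ late, construct an augmented instance with one extra machine $m_{\text{late}}$ whose valuation is $v_{\text{late}}(j_t)=-w_t$, and on which no deadline-feasibility constraint is imposed. Any job placed on $m_{\text{late}}$ is interpreted as rejected and contributes its penalty to that machine's (negative) valuation. The key correspondence is that schedules of the original instance with a chosen set of late jobs are in bijection with feasible allocations of the augmented instance, where the total penalty incurred equals $-v_{\text{late}}(\jobs_{\text{late}})$. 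Establishing this bijection and that it preserves the targeted MMS values on the genuine machines $m_1,\dots,m_m$ is the substance of the argument.

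Second, I would verify that the augmented instance still fits each of the three parameterizations after the new parameter $w_{\max}=\max_t w_t$ is introduced. For parameter $\{n\}$ (invoking \cref{theo:FPTdpn}), the number of jobs is unchanged and we simply add one machine, so the dynamic program runs verbatim, with $m_{\text{late}}$ treated as the machine exempt from the feasibility predicate. For $\{m,v_{\max},g_{\max}\}$ (via \cref{theo:FPTLayers}) and for $\{m,\#d,p_{\max},v_{\max}\}$ (via \cref{theo:FPTmdvp}), the extra machine increments $m$ by one and enlarges the largest entry $\Delta$ in the $n$-fold to $\max\{p_{\max},v_{\max},w_{\max}\}$; since $w_{\max}$ is now a parameter, $\Delta$ remains bounded by a function of the parameters, so the $n$-fold dimensions $r,s,t,\Delta$ stay FPT-bounded and the running-time analysis of the respective theorems carries over up to this substitution.

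Third, to obtain the optimization statements I would layer a binary search for the minimum total penalty on top of the decision routines. Because every penalty is a nonnegative integer and there are $n$ jobs, the total penalty lies in the discrete range $[0,nw_{\max}]$, so $O(\log(nw_{\max}))$ decision calls suffice; at each step we cap the absolute valuation allowed on $m_{\text{late}}$ and ask whether the targeted MMS guarantees (exact, $\delta$-additive, or $\alpha$-multiplicative) are simultaneously achievable on the real machines. This mirrors the binary-search arguments of \cref{lem:nAppdp,lem:mnvApp,lem:mdvpApp}, and the logarithmic overhead is absorbed into the $\text{poly}(|I|)$ or $\rtNfold$ factors, yielding the claimed fixed-parameter tractability for the $\times$-, \addmms, and \wfmms variants under parameterizations (i), (ii), (iii).

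The main obstacle I anticipate is ensuring that the MMS target values $\fairValue_i$ for the genuine machines remain well-defined and unaffected by the presence of $m_{\text{late}}$. Specifically, the maximin share value $\MMSi$ is defined as a max-min over feasible allocations into $m$ bundles, and introducing an extra machine changes the number of bundles in that definition; I would need to argue carefully either that the MMS values are computed with respect to the original $m$ machines (so $m_{\text{late}}$ enters only the allocation search, not the share definition), or that the bijection above preserves the relevant share quantities. Handling this subtlety correctly — decoupling the \emph{share-defining} partition from the \emph{allocation-feasibility} machine set — is where the proof requires genuine care rather than routine adaptation.
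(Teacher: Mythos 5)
Your proposal follows essentially the same route as the paper: introduce an auxiliary machine $m_{\text{late}}$ with valuations $-w_t$ that is exempt from the deadline-feasibility checks, observe that $w_{\max}$ must join the parameter list to keep the largest $n$-fold entry bounded, and wrap a binary search over the total penalty in $[0,nw_{\max}]$ around the existing decision routines from Sections~\ref{sec:n}--\ref{sec:mdvp}. The subtlety you flag at the end --- that the share-defining partition must be decoupled from the machine set used for allocation feasibility --- is a legitimate point that the paper's brief proof leaves implicit, but it does not alter the structure of the argument.
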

\begin{proof}
    We restate the modifications to \cref{theo:nobj,theo:mnvobj,theo:mdvpobj} required for the results. We construct the additional machine $m_{\text{late}}$ and set its valuation function to be $-w_t$ for all jobs $j_t$ with $t \in [n].$ Then, we construct our dynamic program, graph or $n$-fold respectively. For each construction, we do not constrain feasibility on $m_{\text{late}}.$ For our graph algorithm, we thus construct \emph{every} bundle and connect it to $m_{\text{late}}.$ Thus, we now need to solve the unbalanced assignment problem, which is also tractable via the Hungarian algorithm. However, its running time depends on the number of edges and therefore increases to $O(n^n(m+1) (m+1)^2\log (m+1)),$ which is still FPT for parameter $n$. For our $n$-fold algorithms, we simply increase the number of feasible configurations on $m_{\text{late}}$. The core difference is the appearance of $w_{\max}$ as a possible largest parameter in the $n$-fold. Therefore, we need to include it in the list of parameterizations. Doing this allows us to reuse the frameworks given in Sections~\ref{sec:n}-\ref{sec:mdvp}.
\end{proof}

\section{Alternative Fairness Measures: Envy-Freeness}\label{sec:EF}
The reader may wonder whether FPT algorithms can be obtained for 
job scheduling with other classical fairness measures. We show now why 
this is not the case for envy-freeness. 
Let $(\jobs_1,\ldots,\jobs_m)$ be a feasible assignment of the set of jobs $\jobs$ to the machines. This assignment is {\em envy-free} if, for every machine $m_i$, $v_i(\jobs_i)\geq v_i(\jobs_\ell)$, for all $\ell \in [m]$. This concept is due to Foley~\cite{Fol67}. However, envy-free allocations need not exist
so several famous relaxations have been studied.
An allocation is EF-$1$ if, for any two machines $m_j, m_\ell$, there exists 
a job which we can remove from $\ell$'s allocation so that $m_j$ does not envy $m_\ell$. For the standard fair division problem, EF-$1$ allocations are guaranteed to exist~\cite{Bud11}.
 Similarly, an EF-$k$ allocation allows us remove up to $k$ jobs so that no machines envy each other, whereas an EF-$x$ allocation requires we remove the least valuable job.

However, even for the straightforward case of two machines $m_1$ and $m_2$ with identical evaluation functions, that is, $v_1(j_t) = v_2(j_t)$ for all jobs $j_t \in \jobs$, it can be shown via a simple example that no EF-$k$ allocation exists for any $k \leq n-2$, also ruling out the possibility of EF-$x$ allocations.

\begin{theorem}
    There exists no EF-$k$ (thus, also no EF-$x$) fair allocation with scheduling constraints, even for the case of just two machines, two different processing times, one deadline, and the same evaluation function, that is, $v_1(j_t) = v_2(j_t)$ for all $j_t \in \jobs$ which only has two distinct values.
\end{theorem}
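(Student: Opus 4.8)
The plan is to construct an explicit counterexample instance with two machines and demonstrate that envy persists even after removing up to $n-2$ jobs from either bundle. The key observation is that scheduling constraints can force a highly unbalanced allocation: if the deadline is tight, one machine may be forced to take nearly all the valuable jobs while the other is stuck with low-value (or bulky) jobs. First I would fix the single deadline $d$ and choose two distinct processing times, say one ``large'' processing time $p_{\text{big}}$ and one ``small'' processing time $p_{\text{small}}$, together with two distinct valuations. The goal is to design the instance so that feasibility (respecting the deadline under the EDF rule) permits only allocations in which the total value is split very unevenly between the two machines.

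The core construction I would use is the following: make almost all of the value concentrated in jobs that \emph{cannot} be co-scheduled on a single machine without violating the deadline, so that any feasible allocation must place most of the high-value jobs on one machine. Concretely, suppose there are $n$ jobs where $n-1$ of them have a large processing time so that at most one such job fits before the deadline $d$ on a given machine, while the remaining job has a tiny processing time. If the two distinct values are assigned so that the bulky jobs carry all the value, then feasibility forces at most one bulky job per machine, which is impossible once $n-1 > 2$; hence I would instead calibrate the processing times so that one machine can legally hold a long chain of jobs under EDF while the other is blocked. The cleanest realization is to let the deadline and processing times be chosen so that \emph{one} machine can feasibly receive all $n$ jobs while the other can receive at most one, forcing the split to be $(n-1, 1)$ or $(n, 0)$; then $m_2$ (holding at most one job) envies $m_1$, and even removing up to $n-2$ jobs from $m_1$'s bundle of $n-1$ jobs still leaves $m_1$ with at least one more valuable job than $m_2$ holds.

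The key steps in order would be: (1) specify the two processing times and single deadline so that the only feasible allocations are the extremely unbalanced ones just described, invoking the EDF feasibility characterization from the model section; (2) assign the two distinct valuations (shared by both machines) so that the bundle $m_1$ is forced to hold is worth strictly more than $m_2$'s bundle by a margin that cannot be erased by deleting $k \le n-2$ jobs; (3) verify envy explicitly, computing $v_1(\jobs_1) - v_1(\jobs_2 \text{ augmented with up to } k \text{ removed jobs})$ and checking it stays positive for all $k \le n-2$; and (4) conclude that no EF-$k$ allocation exists for $k \le n-2$, and since EF-$x$ is at least as demanding as EF-$1 \subseteq$ EF-$k$, no EF-$x$ allocation exists either.

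The main obstacle I anticipate is the calibration in step (1): I must choose the processing times and deadline so that the feasible region is \emph{exactly} the unbalanced allocations, using only two distinct processing times and a single deadline, while simultaneously having the valuations take only two distinct values. There is tension between forcing infeasibility of balanced allocations (which wants large processing times relative to $d$) and allowing one machine to hold many jobs (which wants small processing times). The resolution will likely be to let the ``bulky'' jobs have processing time just over $d/2$ so that no machine can hold two of them, forcing each machine to take at most one bulky job; with more than two bulky jobs this is infeasible, so I would instead make the bulky jobs feasible only in a specific nested-deadline-free arrangement. Getting a clean instance where the envy margin provably survives the removal of $n-2$ jobs, rather than just one or two, is the delicate quantitative heart of the argument, and I would verify it by a direct value computation rather than an asymptotic estimate.
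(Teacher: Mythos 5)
Your overall strategy is the same as the paper's: exhibit an explicit two-machine instance in which the scheduling constraints force an extremely unbalanced allocation, then check that the envy cannot be erased by removing up to $n-2$ jobs. However, your proposal stops short of an actual construction, and the calibrations you float do not resolve the difficulty you yourself identify. Making bulky jobs have processing time just over $d/2$ caps each machine at one such job and renders the instance infeasible once there are three of them, as you note; and the idea that ``one machine can feasibly receive all $n$ jobs while the other can receive at most one'' cannot be realized through machine asymmetry, since the two machines are identical and feasibility depends only on the bundle. The forcing has to come entirely from the jobs, and you never pin down how.

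The paper's instance is much simpler than the delicate calibration you anticipate. Take $n-1$ jobs with processing time $1$, value $1$ (on both machines), and deadline $n-1$, plus one job with processing time $n-1$, value $0$, and deadline $n-1$. The long job occupies an entire machine up to the common deadline, so it cannot share a machine with any unit job; consequently all $n-1$ unit jobs must go to the other machine, where they fit exactly (total processing time $n-1$ equals the deadline). The unique feasible allocation, up to swapping machines, gives one machine value $n-1$ and the other value $0$, and the envy survives the removal of any $n-2$ jobs since all $n-1$ valuable jobs would have to be deleted. Note also that the ``quantitative heart'' you worry about is trivial here precisely because the envious machine's bundle is worth $0$ and every job on the other machine is worth $1$; no careful margin computation is needed. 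Your step (4) conclusion about EF-$x$ is fine.
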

\begin{proof}
Let $m=2$. We create $n$ jobs.
The first $n-1$ jobs have value 1 on both machines, i.e., $v_1(j_p) = v_2(j_p) = 1$, processing time $p_p = 1$, and deadline $d_p = n-1$ for $p = 1, 2, \dots, n-1$.
The last job has a value of $0$, i.e. $v_1(j_n) = v_2(j_n) = 0$, processing time $j_n = n-1$, and deadline $d_n = n-1$.

As such, the only feasible schedule must assign the first $n-1$ jobs onto one machine, and the remaining job onto the other machine. Let us assume w.l.o.g. that machine $m_2$ gets job $j_n$. Thus, machine $m_2$ is envious of machine $m_1$ unless all $n-1$ jobs on machine $m_1$ get removed.
\end{proof}

This daunting result has a simple explanation: the scheduling constraints inherently limit the number and structure of the feasible allocations (schedules). However, traditional fairness measures, such as EF-$k$ and EF-$x$, do not consider these constraints when assessing the fairness of allocations. 
\bibliography{ref}

\end{document}